\newtheorem{definition}{Definition}
\newtheorem{proposition}{Proposition}
\newtheorem{theorem}{Theorem}
\newtheorem{lemma}{Lemma}
\title{Revisiting the Canonicalization for Fast and Accurate
Crystal Tensor Property Prediction}
\author {
    Haowei Hua\textsuperscript{\rm 1},
    Jingwen Yang\textsuperscript{\rm 1},
    Wanyu Lin\textsuperscript{\rm 1}\thanks{Corresponding author.},
    Zhou Pan\textsuperscript{\rm 2}
}
\begin{document}

\maketitle
\begin{abstract}
Predicting the tensor properties of crystalline materials is a fundamental task in materials science. Unlike scalar property prediction, which requires invariance, tensor property prediction requires maintaining $O(3)$ group tensor equivariance. 
Achieving such equivariance typically demands specialized architectural designs, which substantially increase computational cost. Canonicalization, a classical technique for geometry, has recently been explored for efficient learning with symmetry.
In this work, we revisit the problem of crystal tensor property prediction through the lens of canonicalization. Specifically, we demonstrate how polar decomposition, a simple yet efficient algebraic method, can serve as a form of canonicalization and be leveraged to ensure equivariant tensor property prediction. Building upon this insight, we propose a general $O(3)$-equivariant framework for fast and accurate crystal tensor property prediction, referred to as {\em GoeCTP}. By utilizing canonicalization, {\em GoeCTP} achieves high efficiency without requiring the explicit incorporation of equivariance constraints into the network architecture.
Experimental results indicate that {\em GoeCTP} achieves the high prediction accuracy and runs up to 13$\times$ faster compared to existing state-of-the-art methods, underscoring its effectiveness and efficiency. 

\end{abstract}


\section{Introduction}

The tensor properties of crystalline materials can capture intricate material responses through high-order tensors,
with wide-ranging applications in fields such as physics, electronics, and engineering~\citep{yanspace}. These tensor properties span various orders, such as dielectric tensor with two orders, piezoelectric tensor with three orders, and elastic tensor with four orders.
Accurate prediction of these tensor properties is critical for novel materials discovery and design with targeted characteristics. 
Thus far, several works have been dedicated to crystal tensor property prediction. One prominent category involves {\em ab initio} physical simulation techniques, such as density functional theory (DFT) \citep{petousis2016benchmarking}. These classical simulation techniques can provide acceptable error margin for predicting various material properties. However, they necessitate extensive computational resources due to the complexity of handling large crystals with a vast number of atoms and electrons~\citep{yanspace}, hindering their applicability in practice. 

Alternatively, machine learning (ML) models have been proposed to facilitate the process of crystalline material property prediction. These methods typically leverage high-precision datasets deriving from {\em ab initio} simulations and utilize crystal graph construction techniques along with graph neural networks (GNNs) \citep{chen2019graph,louis2020graph,choudhary2021atomistic,xie2018crystal} or transformers~\citep{yancomplete,taniaicrystalformer,yan2022periodic,lee2024density,wang2024conformal,ito2025rethinking}.
Most existing methods are designed for scalar property prediction, focusing on achieving $SO(3)$ invariance of the crystal structures. However, these scalar-property methods could not be adapted to predict tensor properties of crystalline materials, which exhibits significantly higher complexity. 
This complexity arises from the fact that tensor properties describe how crystals respond to external physical fields, such as electric fields or mechanical stress
\citep{nye1985physical,resta1994macroscopic,yanspace}; their modeling necessitates preserving consistency with the crystal’s spatial orientation, exhibiting a unique tensor $O(3)$ equivariance \citep{yanspace}.

Therefore, a few recent studies attempt to ensure equivariance through specialized designs of network architectures~\citep{mao2024dielectric,lou2024discovery,heilman2024equivariant,wen2024equivariant,pakornchote2023straintensornet,yanspace, zhong2023general}. These methods generally integrate directional features and spherical harmonics to preserve equivariance.
However, achieving this often requires computationally expensive operations such as tensor products, which introduce substantial overhead, especially when dealing with high-order tensor data. {\em Therefore, providing fast and accurate predictions of tensor properties across various materials is challenging.} 

\begin{figure*}[t] 
  \centering   
  \includegraphics[width=\textwidth]{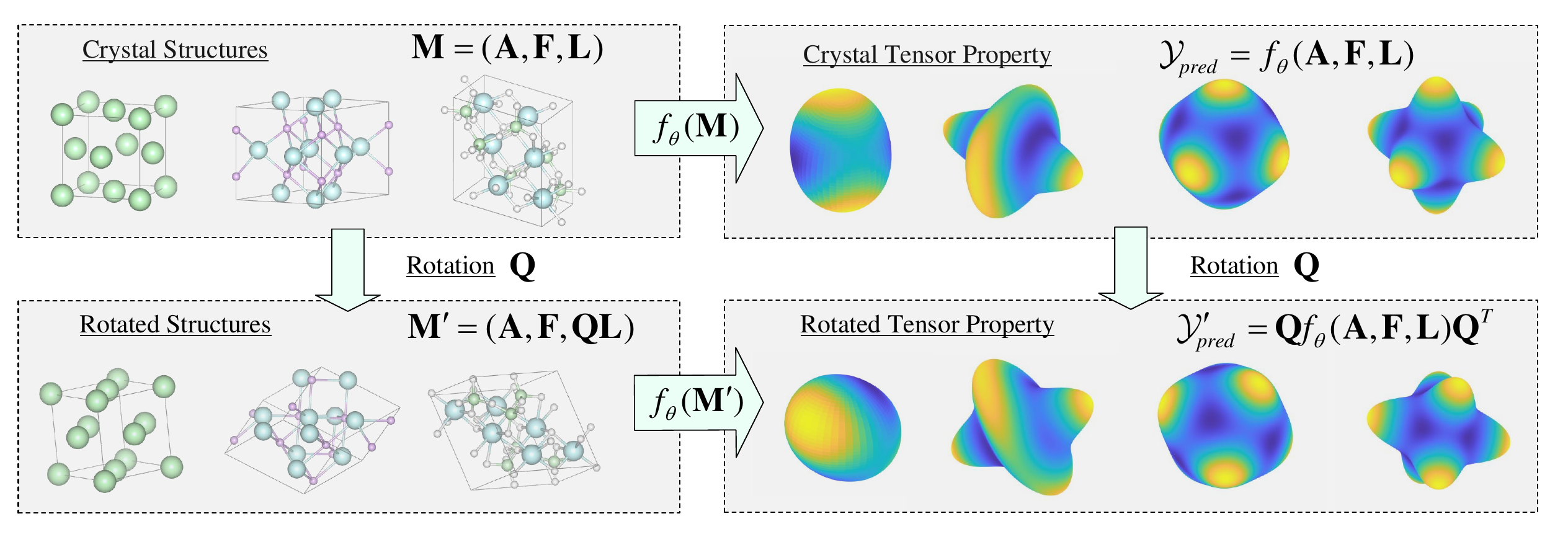}
  \caption{The illustration of $O(3)$-equivariance for crystal tensor prediction. The visualization of the crystal structures on the left was generated using VESTA \citep{momma2011vesta}, while the visualization of the crystal tensor properties on the right follows the method of \citet{yanspace} and VELAS \citep{ran2023velas}.
  } 
  \label{equvarience_tensor}
\end{figure*}

One approach that has show promising results in efficient learning with symmetry across various fields, including point clouds \citep{linequivariance}, 
n-body simulation \citep{kaba2023equivariance}, and antibodies generation~\citep{martinkus2023abdiffuser}, is the use of ``canonicalization". Specifically,  canonicalization maps a geometric data to an invariant representation \citep{ma2024canonicalization,dymequivariant}, referred to as the canonical form, and subsequently reconstructs an equivariant output from the canonical form without imposing any architectural constraints on the backbone network. However, to date, it has not been explored for crystal tensor property prediction.

In this work, we revisit the task of crystal tensor property prediction through the lens of canonicalization and introduce a novel canonicalization strategy tailored for this particular setting. Specifically, rather than embedding equivariance directly into model architecture, we propose a simple yet effective canonicalization module, termed R\&R. In particular, our canonicalization is instantiated based on polar decomposition, a continuous mapping technique~\citep{dymequivariant} that can provide enhanced robustness.
During prediction, the R\&R module applies rotations and reflections to transform the input crystal structure into its canonical form. The canonical form is then fed into a property prediction network to obtain the corresponding canonical tensor representation. Simultaneously, the orthogonal matrix derived from the R\&R module is utilized to recover the equivariant output via the tensor transformation rule, enabling equivariant tensor property prediction without incurring additional computational overhead.
We conducted experiments on dielectric, piezoelectric, and elastic tensor datasets, respectively, showcasing that the proposed method can achieve $O(3)$-equivariant tensor prediction while maintaining high efficiency. Compared to the previous state-of-the-art work~\citet{yanspace}, the {\em GoeCTP} method achieves high prediction accuracy and runs by up to 13$\times$ faster.

\begin{figure*}[t] 
  \centering   
  \includegraphics[width=\textwidth]{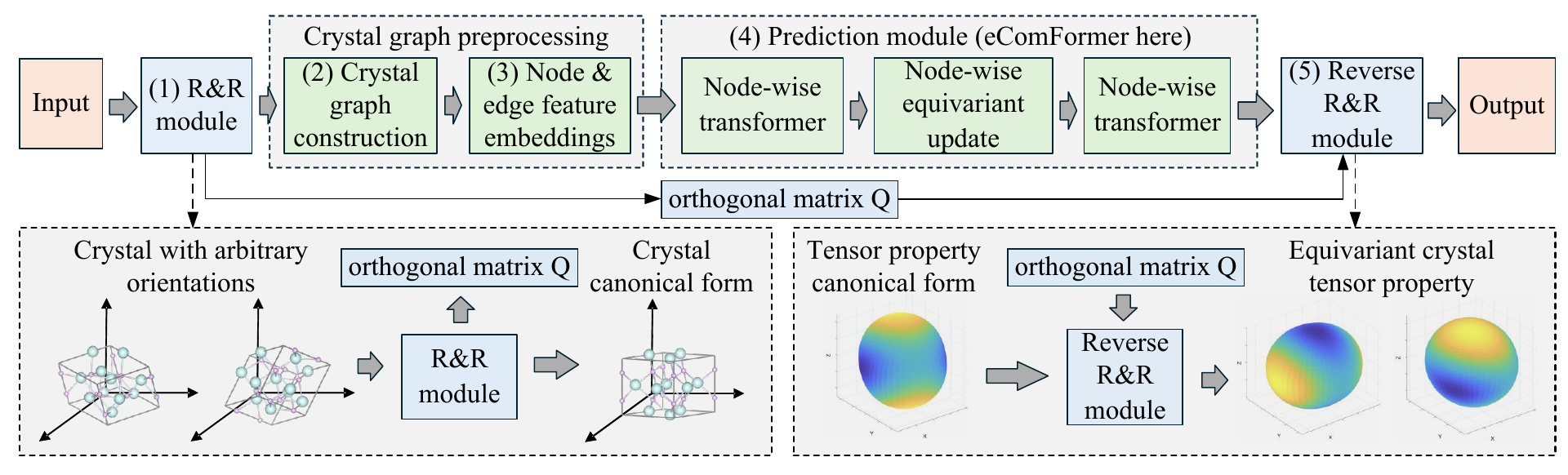}
  \caption{The Illustration of {\em GoeCTP}. To begin with, \textbf{(1)} the R\&R module rotates and reflects the input crystal structure, which may have an arbitrary orientation, to the 
canonical form of this crystal. Next, \textbf{(2)} Crystal Graph Construction module organizes the adjusted input into a crystal graph, followed by \textbf{(3)} the Node \& Edge Feature Embedding module, which encodes the features of the crystal graph. Subsequently, \textbf{(4)} the Prediction module leverages these embedded features to predict the 
canonical form of tensor properties corresponding to the 
canonical form of input crystal. Finally, \textbf{(5)} the Reverse R\&R module applies the orthogonal matrix $\mathbf{Q}$, obtained from the R\&R module, to ensure the equivariance of the output tensor properties.} 
  \label{fig:overview}
\end{figure*}

\section{Background}
\label{sec: pre}
\subsection{Preliminaries}
Crystalline materials consist of a periodic arrangement of atoms in 3D space, characterized by a repeating unit known as the unit cell.
A complete crystal can thus be described by the atomic types, atomic coordinates, and lattice parameters of a single unit cell~\citep{yan2022periodic,wang2024crystalline,jiaospace,huangcode,hua2025local}.
Mathematically, a crystal can be represented as $\mathbf{M}=(\mathbf{A},\mathbf{X},\mathbf{L})$,
where $\mathbf{A}=[\boldsymbol{a}_1,\dots,\boldsymbol{a}_n]^\top\in\mathbb{R}^{n\times d_a}$
denotes the atomic features of $n$ atoms,
$\mathbf{X}=[\boldsymbol{x}_1,\dots,\boldsymbol{x}_n]^\top\in\mathbb{R}^{n\times3}$
contains their Cartesian coordinates,
and $\mathbf{L}=[\boldsymbol{l}_1,\boldsymbol{l}_2,\boldsymbol{l}_3]\in\mathbb{R}^{3\times3}$
is the lattice matrix composed of three basis vectors.
The entire crystal can then be expressed as
$(\hat{\mathbf{A}}, \hat{\mathbf{X}})=\{(\hat{\boldsymbol{a}_{i}}, \hat{\boldsymbol{x}}_i)|\hat{\boldsymbol{x}}_i=\boldsymbol{x}_i+k_1\boldsymbol{l}_1+k_2\boldsymbol{l}_2+k_3\boldsymbol{l}_3, \hat{\boldsymbol{a}_{i}}=\boldsymbol{a}_{i},k_1,k_2,k_3\in\mathbb{Z},i\in\mathbb{Z},1\leq i\leq n\}$,
which enumerates all atomic positions generated by lattice translations.

Alternatively, using the lattice matrix $\mathbf{L}$ as the basis vectors leads to the fractional coordinate representation,
where each atom is assigned a fractional coordinate
$\boldsymbol{f_{i}}=[f_{i,1},f_{i,2},f_{i,3}]^{\top}\in[0,1)^{3}$
and its Cartesian coordinate is given by
$\boldsymbol{x}_{i}=\sum_j f_{i,j}\boldsymbol{l}_{j}$.
Accordingly, the crystal can also be represented as
$\mathbf{M}=(\mathbf{A},\mathbf{F},\mathbf{L})$,
where $\mathbf{F}=[\boldsymbol{f}_1,\cdots,\boldsymbol{f}_n]^{\top}\in[0,1)^{n\times 3}$
denotes the fractional coordinates of all atoms in the unit cell.

\subsection{Problem statement}

\textbf{Crystal Tensor Prediction.} 
The crystal tensor properties prediction is a classic regression task. 
Its goal is to estimate the high-order tensor property denoted as $\mathcal{Y}$ from the raw crystal data represented as $\mathbf{M}=(\mathbf{A},\mathbf{F},\mathbf{L})$. Given that the crystal data $\mathbf{M}$ resides within the input space $\mathcal{V}$, and the tensor property $\mathcal{Y}$ belongs to the separate output space $\mathcal{W}$,
the objective of crystal tensor prediction is to find a function ${f_\theta }:\mathcal{V} \to \mathcal{W}$ that accurately maps input crystal data to the desired tensor property.
This is achieved by minimizing the discrepancy between the true property $\mathcal{Y}$ and the predicted property value $\mathcal{Y}_{pred}$.
Therefore, the crystal tensor property prediction task can be mathematically formulated as the following optimization problem:
\begin{equation}
\label{statement}
\mathop {\min }\limits_\theta  \sum\limits_{n = 1}^N {||{\cal Y}_{pred}^{(n)} - {\cal Y}^{(n)}|{|^2}} ,\quad {\cal Y}_{pred} = {f_\theta }(\mathbf{A},\mathbf{F},\mathbf{L}),
\end{equation}
where $f_\theta(\cdot)$ represents a tensor prediction model with learnable parameters $\theta$, and the superscript $n$ denotes individual samples in the dataset. In what follows, we will omit superscript $n$ for simplicity. 
As previously proposed in the literature~\citep{mao2024dielectric,wen2024equivariant,yanspace}, 
our objective is to estimate high-order tensor properties, including 2-order dielectric tensor (i.e., $\mathcal{Y}\buildrel \Delta \over =\boldsymbol{\varepsilon} \in \mathbb{R}^{3\times3}$), 3-order piezoelectric tensor (i.e., $\mathcal{Y}\buildrel \Delta \over =\mathbf{e} \in \mathbb{R}^{3\times3\times3}$), and 4-order elastic tensor (i.e., $\mathcal{Y}\buildrel \Delta \over = C \in \mathbb{R}^{3\times3\times3\times3}$), respectively.

\subsubsection{$\mathbf{O(3)}$ group.}The $O(3)$ group is an orthogonal group, consisting of rotations and reflections.
The elements $g \in O(3)$ act on vectors or tensors in $\mathcal{V}$, 
$\mathcal{W}$ through their respective group representations, $\rho_\mathcal{V}:O(3) \to GL(\mathcal{V})$ and $\rho_\mathcal{W} : O(3) \to GL(\mathcal{W})$, where $GL(\mathcal{V})$ and $GL(\mathcal{W})$ are the space of invertible linear maps $\mathcal{V} \to \mathcal{V}$ and $\mathcal{W} \to \mathcal{W}$, respectively. Specifically, the action of an element $g \in O(3)$ on crystal data $\mathbf{M}$ is defined as: $g\cdot\mathbf{M}=\rho_\mathcal{V}(g)\mathbf{M}=(\mathbf{A},\mathbf{F},\mathbf{Q}\mathbf{L})$.
For dielectric tensor property $\mathcal{Y}$, the action of $g$ is given by:
$g\cdot\mathcal{Y}=\rho_\mathcal{W}(g)\mathcal{Y}=\mathbf{Q}\mathcal{Y}\mathbf{Q}^\top$,
where $\mathbf{Q}$ is a $3\times3$ orthogonal matrix
(for additional details regarding the transformation of tensor properties, please refer to Appendix \ref{tensor_equivariance}).

\subsubsection{$\mathbf{O(3)}$-Equivariance for Crystal Tensor Prediction.}In the crystal tensor properties prediction task, due to the difference between $\rho_\mathcal{V}(g)$ and $\rho_\mathcal{W}(g)$, the requirements for $O(3)$ equivariance typically differ from the $O(3)$-equivariance defined in the general molecular studies \citep{hoogeboom2022equivariant,xuequivariant,zheng2024relaxing,song2024equivariant,aykent2025gotennet, cen2024high}. 
Specifically, taking the dielectric tensor as an example where $\mathcal{Y}_{label}\buildrel \Delta \over =\boldsymbol{\varepsilon} \in \mathbb{R}^{3\times3}$,
for a tensor prediction model $f_\theta(\cdot)$ in Eq.\ref{statement}, if it is $O(3)$ equivariant, it must satisfy the following equality formulated as: 
\begin{equation}
\label{equivariance_eq}
f_\theta(\mathbf{A},\mathbf{F},\mathbf{Q}\mathbf{L}) = \mathbf{Q}f_\theta(\mathbf{A},\mathbf{F},\mathbf{L})\mathbf{Q}^\top,
\end{equation}
where $\mathbf{Q}\in\mathbb{R}^{3 \times 3}$ is an arbitrary orthogonal matrix \citep{yanspace}. 
For clarity, an illustration of $O(3)$-equivariance for crystal tensor prediction is shown in
Fig.\ref{equvarience_tensor}
(for more equivariance details, see Appendix \ref{tensor_equivariance}).

\section{Methodology}
\label{Methodology}

In this section, we will first provide the motivation driving our new framework. 
We then reformulate the $O(3)$ equivariant tensor prediction task from the perspective of canonicalization.
Building upon this foundational perspective, we demonstrate how polar decomposition can be employed as a canonicalization strategy and effectively applied to solve tensor prediction tasks with both effectiveness and efficiency.

\subsection{The Motivation of Our Framework} 
\label{motivation}
To enforce $O(3)$ equivariance as defined in Eq. \ref{equivariance_eq}, existing crystal tensor prediction approaches typically rely on irreducible representations and Clebsch-Gordan (CG) tensor products. Although these formulations are theoretically rigorous and have shown strong empirical performance, they introduce substantial computational overhead due to the complexity of tensor operations.

Recently, several studies have explored the scalarization trick to exploit high-degree steerable features, thereby eliminating the need for CG tensor products while maintaining both accuracy and efficiency~\citep{aykent2025gotennet,cen2024high,han2025survey}. Similarly, if $O(3)$ tensor equivariance could be achieved for high-order tensor predictions without relying on tensor product operations, the computational cost could be significantly reduced.
Motivated by recent progress in canonicalization-based symmetric learning, we propose a principled framework that leverages canonicalization to realize $O(3)$-equivariant tensor prediction for crystalline materials. 

\subsection{$O(3)$ Equivariant Tensor Prediction from the Perspective of Canonicalization}

To mathematically describe $O(3)$-equivariant tensor prediction from the perspective of canonicalization, we first introduce and extend several fundamental concepts derived from canonicalization techniques \citep{ma2024canonicalization, kaba2023equivariance,dymequivariant}.

\begin{definition}[{\em Orbit.}]\label{definition_Orbit}
The orbit of a crystal $\mathbf{M}$ is defined as 
$\mathrm{Orbit}(\mathbf{M}) = \{ g \cdot \mathbf{M} \mid g \in O(3) \}$.
\end{definition}

In the context of crystal tensor prediction, all configurations of a crystal (i.e. a crystal wirh different orientations) in the space fall into the same orbit under the $O(3)$ group transformation. Specifically, for any two crystal configurations denoted as $\mathbf{M}_1$ and $\mathbf{M}_2$ within the same orbit, there exists a rotation or reflection $g \in O(3)$ such that $\mathbf{M}_1 = g \cdot \mathbf{M}_2$. Likewise, the tensor properties of the crystal also exhibit an orbit, i.e. $\text{Orbit}(\mathcal{Y}) = \{ g \cdot \mathcal{Y} \mid g \in O(3) \}$.
All configurations of a crystal within an orbit could be transformed into one another through group transformations. 
Consequently, we can select a specific configuration from the orbit to serve as a canonical representative (referred to as the \textit{canonical form}) that effectively characterizes the entire orbit. The formal process is outlined in the following definition.

\begin{definition}
\label{definition_Canonicalization}
{\em (Orbit Canonicalization\footnote{
For simplicity, we refer to the procedure as canonicalization, though it may more precisely be regarded as a \textit{quasi-canonicalization} that relaxes permutation invariance.
})}
A function $C_\mathbf{M}: \mathcal{V} \to \mathcal{V}$ is defined as an orbit canonicalization if there exists a canonical form $\mathbf{M}_0 \in \text{Orbit}(\mathbf{M})$ such that 
$C_\mathbf{M}(\mathbf{M}_1) = \mathbf{M}_0$ for all $\mathbf{M}_1 \in \text{Orbit}(\mathbf{M})$.
\end{definition}
Orbit canonicalization enables the transformation of a crystal with arbitrary orientations in the space into one with a specific, well-defined orientation.
Since the output of orbit canonicalization can always be mapped to the invariant canonical form, it follows that
$C_\mathbf{M}(\cdot)$ itself is $O(3)$-invariant.
Therefore, for any prediction function ${f_\theta }(\cdot)$, it holds that ${f_\theta }(C_\mathbf{M}(\cdot))$ remains $O(3)$-invariant (See the proof in Appendix \ref{Proofs}).

Although an
$O(3)$-invariant prediction function has been established via orbit canonicalization, the complete process of equivariant prediction requires specific group transformations that map all crystal configurations within the orbit to their canonical form. To address this requirement, we introduce the concept of rigid registration. Rigid registration is a fundamental concept in fields such as computer vision and computational geometry \citep{tam2012registration}. It aims to find a transformation that matches the position and orientation of one object with the corresponding position and orientation of another object. This concept can be naturally adapted to the crystal, as follows.

\begin{definition}
\label{definition_Rigid}
{\em (Rigid Registration.)}
A function $R_\mathbf{M}: \mathcal{V}\times\mathcal{V} \to O(3)$ is a rigid registration if $\forall\mathbf{M}_1,\mathbf{M}_2\in \text{Orbit}(\mathbf{M}),R_\mathbf{M}(\mathbf{M}_1,\mathbf{M}_2)=g$, such that $\mathbf{M}_1=g\cdot\mathbf{M}_2$.
\end{definition}
Using Definition \ref{definition_Rigid}, 
we can identify a transformation that matches two configurations of a crystal. Specifically, this allows us to find specific group transformations that map all crystal configurations within the orbit to
their canonical form.
By leveraging 
Definition \ref{definition_Canonicalization} and Definition \ref{definition_Rigid} together,
we derive a view for performing $O(3)$-equivariant tensor prediction through canonicalization, as described below.

\begin{proposition}
\label{proposition_Unified}
($O(3)$-Equivariant Tensor Prediction from the Perspective of Canonicalization.)
Given an arbitrary tensor prediction function 
$f(\mathbf{M}): \mathcal{V} \to \mathcal{W}$, 
we define a new function 
$h(\mathbf{M}) = R_\mathbf{M}\!\left(\mathbf{M}, C_\mathbf{M}(\mathbf{M})\right)
\!\cdot\! f\!\left(C_\mathbf{M}(\mathbf{M})\right)$,
such that $h(\mathbf{M})$ is $O(3)$-equivariant for tensor prediction. 
\end{proposition}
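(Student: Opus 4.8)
The plan is to verify the defining equality of $O(3)$-equivariance for crystal tensor prediction, namely that $h(g \cdot \mathbf{M}) = g \cdot h(\mathbf{M})$ for every $g \in O(3)$, by tracking how each of the three ingredients of $h$ — the canonicalization $C_\mathbf{M}$, the prediction function $f$, and the rigid registration $R_\mathbf{M}$ — behaves under the group action. First I would record the key invariance fact established just before the statement: because $C_\mathbf{M}$ maps every element of an orbit to the same canonical form $\mathbf{M}_0$, we have $C_\mathbf{M}(g \cdot \mathbf{M}) = C_\mathbf{M}(\mathbf{M}) = \mathbf{M}_0$ for all $g$, so $f(C_\mathbf{M}(g\cdot\mathbf{M})) = f(\mathbf{M}_0) = f(C_\mathbf{M}(\mathbf{M}))$ — the predicted canonical tensor is literally the same object on both sides.

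Next I would analyze the rigid-registration factor. By Definition~\ref{definition_Rigid}, $R_\mathbf{M}(\mathbf{M}, C_\mathbf{M}(\mathbf{M})) = R_\mathbf{M}(\mathbf{M}, \mathbf{M}_0)$ is the unique (or a chosen) group element $g_\mathbf{M}$ with $\mathbf{M} = g_\mathbf{M} \cdot \mathbf{M}_0$. Applying this to $g \cdot \mathbf{M}$ instead, and using that $C_\mathbf{M}(g\cdot\mathbf{M}) = \mathbf{M}_0$ as well, the registration returns the group element $g_{g\cdot\mathbf{M}}$ satisfying $g\cdot\mathbf{M} = g_{g\cdot\mathbf{M}} \cdot \mathbf{M}_0$. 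Since also $g\cdot\mathbf{M} = g\cdot(g_\mathbf{M}\cdot\mathbf{M}_0) = (g\, g_\mathbf{M})\cdot\mathbf{M}_0$, consistency of the registration forces $g_{g\cdot\mathbf{M}} = g\, g_\mathbf{M}$ (this is exactly the compatibility one needs; I would note it follows from the defining property of $R_\mathbf{M}$, invoking the relevant-stabilizer/well-definedness discussion so that the map on the orbit is consistent). Combining the two observations,
\begin{equation}
h(g\cdot\mathbf{M}) = g_{g\cdot\mathbf{M}} \cdot f(C_\mathbf{M}(g\cdot\mathbf{M})) = (g\, g_\mathbf{M}) \cdot f(\mathbf{M}_0) = g \cdot \big( g_\mathbf{M} \cdot f(C_\mathbf{M}(\mathbf{M})) \big) = g \cdot h(\mathbf{M}),
\end{equation}
where the middle step uses that the $O(3)$ action on $\mathcal{W}$ is a genuine group action so $\rho_\mathcal{W}(g\,g_\mathbf{M}) = \rho_\mathcal{W}(g)\rho_\mathcal{W}(g_\mathbf{M})$. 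Finally I would translate the abstract action $g \cdot (\cdot)$ back into the concrete tensor-transformation form (e.g.\ $\mathbf{Q}(\cdot)\mathbf{Q}^\top$ for the dielectric case, and the analogous multilinear rule for higher orders, citing Appendix~\ref{tensor_equivariance}) to match Eq.~\ref{equivariance_eq} verbatim.

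The main obstacle is the compatibility/well-definedness of the rigid registration under composition — i.e.\ justifying $R_\mathbf{M}(g\cdot\mathbf{M}, \mathbf{M}_0) = g \cdot R_\mathbf{M}(\mathbf{M},\mathbf{M}_0)$. If the crystal has a nontrivial stabilizer in $O(3)$ (a point-group symmetry), the element $g$ with $\mathbf{M}_1 = g\cdot\mathbf{M}_2$ is not unique, so $R_\mathbf{M}$ is only well-defined up to the stabilizer, and one must check that the resulting ambiguity in $g_\mathbf{M} \cdot f(\mathbf{M}_0)$ does not actually affect $h$ — this is where the argument genuinely needs $f(\mathbf{M}_0)$ to be invariant under that stabilizer, which in turn holds because $\mathbf{M}_0$ itself is fixed by it (a symmetric crystal yields a correspondingly symmetric tensor). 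I would handle this by restricting attention to the canonicalization actually used (polar decomposition applied to $\mathbf{L}$), which selects the registration element deterministically from the lattice matrix, making $R_\mathbf{M}(\mathbf{M}, C_\mathbf{M}(\mathbf{M}))$ a bona fide function and the composition identity $g_{g\cdot\mathbf{M}} = g\,g_\mathbf{M}$ exact; the general-position argument then suffices and the degenerate cases are a remark.
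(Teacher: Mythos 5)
Your proof is correct and follows essentially the same route as the paper's: it combines the $O(3)$-invariance of $f\circ C_\mathbf{M}$ (Lemma~\ref{lemma_invariant}) with the equivariance of the registration in its first argument, $R_\mathbf{M}(g\cdot\mathbf{M},\mathbf{M}_0)=g\cdot R_\mathbf{M}(\mathbf{M},\mathbf{M}_0)$ (Lemma~\ref{lemma_equivariant}), and then composes them exactly as in the paper. The stabilizer/well-definedness worry you raise does not actually arise under the paper's formalization, since the action $g\cdot\mathbf{M}=(\mathbf{A},\mathbf{F},\mathbf{Q}\mathbf{L})$ on the data tuple is free ($\mathbf{L}$ invertible means $\mathbf{Q}\mathbf{L}=\mathbf{L}$ forces $\mathbf{Q}=\mathbf{I}$), which is precisely the cancellation the paper uses in proving the registration lemma.
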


The proof of Proposition \ref{proposition_Unified} is provided in Appendix \ref{Proofs}. 
Specifically, in this unified view of $O(3)$ equivariant tensor prediction, 
we leverage rigid registration, as defined in Definition \ref{definition_Rigid}, to find specific group transformations that map crystal configurations within the orbit to
their canonical form. This canonical form is determined through orbit canonicalization, as described in Definition \ref{definition_Canonicalization}. Subsequently, we can directly apply these transformations to the network’s output, thereby effectively achieving the $O(3)$ equivariance. 
Under this perspective, our framework introduces no architectural constraints or special design requirements on the backbone network itself, since the $O(3)$ transformations of the crystal are externally determined through canonicalization and applied directly to the outputs.
Building upon Proposition \ref{proposition_Unified}, 
the key challenge in achieving $O(3)$-equivariant tensor prediction lies in identifying the appropriate orbit canonicalization and rigid registration functions.

\subsection{Our Proposed Framework: {\em GoeCTP}} 
\label{sec:goectp}

In what follows, we demonstrate that polar decomposition can effectively serve as both orbit canonicalization and rigid registration functions for crystal tensor prediction.
Building on this foundation, 
we will first introduce the core rotation and reflection (R\&R) module of the proposed {\em GoeCTP}, which is to obtain the canonical form for a crystal with arbitrary spatial orientations using polar decomposition. Then, we will describe how the input crystal data is processed and introduce the property prediction network of {\em GoeCTP}. Finally, we will explain how proposed {\em GoeCTP} achieves equivariant tensor predictions. An overview of the {\em GoeCTP} framework is illustrated in Fig.\ref{fig:overview}.

\begin{theorem}
 (Polar Decomposition \citep{hall2013lie,higham1986computing,jiaospace}.) An invertible matrix $\mathbf{L}\in\mathbb{R}^{3\times3}$ can be uniquely decomposed into $\mathbf{L}=\mathbf{Q}\mathbf{H}$, where $\mathbf{Q}\in\mathbb{R}^{3\times3}$ is an orthogonal matrix, $\mathbf{H}\in\mathbb{R}^{3\times3}$ is a symmetric positive semi-definite matrix. \label{proposition_polar}
\end{theorem}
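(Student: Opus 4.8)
The plan is to establish the polar decomposition of an invertible matrix $\mathbf{L}\in\mathbb{R}^{3\times3}$ by leveraging the spectral theory of symmetric positive definite matrices, and then to argue uniqueness separately. First I would consider the matrix $\mathbf{L}^\top\mathbf{L}$, which is symmetric and, since $\mathbf{L}$ is invertible, positive definite: for any nonzero $\boldsymbol{v}$, $\boldsymbol{v}^\top\mathbf{L}^\top\mathbf{L}\boldsymbol{v} = \|\mathbf{L}\boldsymbol{v}\|^2 > 0$. By the spectral theorem, $\mathbf{L}^\top\mathbf{L}$ admits an orthonormal eigenbasis with strictly positive eigenvalues, so it has a well-defined symmetric positive definite square root, which I call $\mathbf{H} := (\mathbf{L}^\top\mathbf{L})^{1/2}$, obtained by taking positive square roots of the eigenvalues in that eigenbasis. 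Note $\mathbf{H}$ is invertible since its eigenvalues are positive.

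Next I would define $\mathbf{Q} := \mathbf{L}\mathbf{H}^{-1}$ and verify it is orthogonal. The key computation is $\mathbf{Q}^\top\mathbf{Q} = \mathbf{H}^{-\top}\mathbf{L}^\top\mathbf{L}\mathbf{H}^{-1} = \mathbf{H}^{-1}(\mathbf{H}^2)\mathbf{H}^{-1} = \mathbf{I}$, using that $\mathbf{H}$ is symmetric and $\mathbf{H}^2 = \mathbf{L}^\top\mathbf{L}$. Hence $\mathbf{Q}\in O(3)$ and $\mathbf{L} = \mathbf{Q}\mathbf{H}$, giving existence. (The statement in the paper says ``Hermitian positive semi-definite''; since $\mathbf{L}$ is real and invertible, $\mathbf{H}$ is in fact real symmetric and positive definite, so this is a slightly weaker phrasing of what holds.)

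For uniqueness, suppose $\mathbf{L} = \mathbf{Q}_1\mathbf{H}_1 = \mathbf{Q}_2\mathbf{H}_2$ with both $\mathbf{Q}_i$ orthogonal and both $\mathbf{H}_i$ symmetric positive semi-definite. Since $\mathbf{L}$ is invertible, each $\mathbf{H}_i$ is invertible, hence positive definite. Then $\mathbf{L}^\top\mathbf{L} = \mathbf{H}_i^\top\mathbf{Q}_i^\top\mathbf{Q}_i\mathbf{H}_i = \mathbf{H}_i^2$, so both $\mathbf{H}_1$ and $\mathbf{H}_2$ are symmetric positive definite square roots of the same matrix $\mathbf{L}^\top\mathbf{L}$. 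The uniqueness of the positive definite square root of a positive definite matrix (which follows from the spectral theorem: two commuting symmetric matrices are simultaneously diagonalizable, forcing agreement of their eigenvalues on each eigenspace) gives $\mathbf{H}_1 = \mathbf{H}_2 =: \mathbf{H}$, and then $\mathbf{Q}_1 = \mathbf{L}\mathbf{H}^{-1} = \mathbf{Q}_2$.

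The main obstacle is the uniqueness of the symmetric positive definite square root: one must be careful that commuting symmetric matrices are simultaneously orthogonally diagonalizable and then argue that any symmetric positive definite square root must act as the positive square root on each eigenspace of $\mathbf{L}^\top\mathbf{L}$. Everything else — positivity of $\mathbf{L}^\top\mathbf{L}$, construction of $\mathbf{H}$, orthogonality of $\mathbf{Q}$ — is a routine consequence of the spectral theorem. Since this is a classical result, I would likely cite the standard references already given (\citep{hall2013lie,higham1986computing}) rather than reproduce the full argument, and include only the short existence computation and the reduction of uniqueness to square-root uniqueness.
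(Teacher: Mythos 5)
Your proof is correct and is the standard spectral-theorem argument: form $\mathbf{L}^\top\mathbf{L}$, take its unique symmetric positive definite square root as $\mathbf{H}$, set $\mathbf{Q}=\mathbf{L}\mathbf{H}^{-1}$, and reduce uniqueness to uniqueness of the positive square root. The paper itself gives no proof of this statement---it is quoted as a classical result with citations---so there is nothing to contrast with; your argument matches the textbook treatment in the cited references, and your remark that invertibility of $\mathbf{L}$ actually yields a symmetric positive \emph{definite} $\mathbf{H}$ (stronger than the ``semi-definite'' phrasing in the statement) is a valid observation.
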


In the fractional coordinate system, the $O(3)$ group transformations applied to a crystal affect only the lattice matrix $\mathbf{L}$, while fractional coordinates remain invariant, making it convenient for rotating the crystal configuration to the canonical form. Therefore, throughout this work, we adopt the fractional coordinate system to represent the crystal represented as $\mathbf{M} = (\mathbf{A}, \mathbf{F}, \mathbf{L})$.
Then building on Theorem 1,  we can define both orbit canonicalization and rigid
registration functions for crystal tensor prediction as follows.

\begin{proposition}
\label{proposition_plolar_Canonical}
(Orbit Canonicalization and Rigid Registration for Crystal Tensor Prediction.) 
By performing polar decomposition on the lattice matrix $\mathbf{L}=\mathbf{Q}\mathbf{H}$ of a crystal $\mathbf{M}$, 
we can define functions $f_{p1}(\mathbf{M})=(\mathbf{A}, \mathbf{F}, \mathbf{H})$ and $f_{p2}(\mathbf{M},f_{p1}(\mathbf{M}))=\mathbf{Q}$, where $f_{p1}(\cdot)$ and $f_{p2}(\cdot)$ correspond to performing polar decomposition on the lattice matrix. In this perspective,
$f_{p1}(\cdot)$ serves as the orbit canonicalization function, while $f_{p2}(\cdot)$ serves as the rigid registration function. The crystal configuration $\mathbf{M}_0=(\mathbf{A}, \mathbf{F}, \mathbf{H})$ is thus identified as the canonical form.
\end{proposition}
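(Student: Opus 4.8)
The plan is to verify that the two functions defined via polar decomposition satisfy the abstract axioms in Definitions \ref{definition_Canonicalization} and \ref{definition_Rigid}, after which Proposition \ref{proposition_Unified} immediately gives $O(3)$-equivariant prediction. The key structural fact I would lean on throughout is that the $O(3)$ action touches only the lattice matrix: if $\mathbf{M}_1 = g \cdot \mathbf{M}_2$ with $g$ represented by the orthogonal matrix $\mathbf{Q}_g$, then $\mathbf{M}_1 = (\mathbf{A}, \mathbf{F}, \mathbf{Q}_g \mathbf{L}_2)$ where $\mathbf{L}_2$ is the lattice of $\mathbf{M}_2$, while $\mathbf{A}$ and $\mathbf{F}$ are unchanged.

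First I would establish that $f_{p1}$ is a well-defined orbit canonicalization. Take any $\mathbf{M}_1 \in \text{Orbit}(\mathbf{M})$ with lattice $\mathbf{L}_1$. Since $\mathbf{L}_1$ is invertible (the lattice vectors form a basis), Theorem \ref{proposition_polar} gives a \emph{unique} decomposition $\mathbf{L}_1 = \mathbf{Q}_1 \mathbf{H}_1$ with $\mathbf{Q}_1$ orthogonal and $\mathbf{H}_1$ Hermitian positive semi-definite (in fact positive definite, by invertibility). Then $f_{p1}(\mathbf{M}_1) = (\mathbf{A}, \mathbf{F}, \mathbf{H}_1)$. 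I must check two things: (i) this output lies in $\text{Orbit}(\mathbf{M})$ — indeed $\mathbf{M}_1 = \mathbf{Q}_1 \cdot (\mathbf{A}, \mathbf{F}, \mathbf{H}_1)$ since multiplying $\mathbf{H}_1$ on the left by the orthogonal matrix $\mathbf{Q}_1$ recovers $\mathbf{L}_1$, so $(\mathbf{A},\mathbf{F},\mathbf{H}_1) = \mathbf{Q}_1^{-1} \cdot \mathbf{M}_1 \in \text{Orbit}(\mathbf{M})$; and (ii) the output is independent of which representative $\mathbf{M}_1$ we chose, i.e.\ it genuinely picks out one canonical $\mathbf{M}_0$. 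For (ii), suppose $\mathbf{M}_1, \mathbf{M}_1'$ are both in the orbit with lattices $\mathbf{L}_1$ and $\mathbf{L}_1' = \mathbf{Q}_g \mathbf{L}_1$ for some orthogonal $\mathbf{Q}_g$. Writing $\mathbf{L}_1 = \mathbf{Q}_1 \mathbf{H}_1$, we get $\mathbf{L}_1' = (\mathbf{Q}_g \mathbf{Q}_1) \mathbf{H}_1$; since $\mathbf{Q}_g \mathbf{Q}_1$ is orthogonal and $\mathbf{H}_1$ is the same positive-definite matrix, uniqueness in Theorem \ref{proposition_polar} forces the polar factor of $\mathbf{L}_1'$ to be exactly $\mathbf{H}_1$. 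Hence $f_{p1}(\mathbf{M}_1') = (\mathbf{A}, \mathbf{F}, \mathbf{H}_1) = f_{p1}(\mathbf{M}_1)$, so the common value $\mathbf{M}_0 = (\mathbf{A},\mathbf{F},\mathbf{H}_1)$ serves as the orbit's canonical form, matching Definition \ref{definition_Canonicalization}.

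Next I would verify that $f_{p2}$ is a rigid registration in the sense of Definition \ref{definition_Rigid}, specialized to the pair $(\mathbf{M}, f_{p1}(\mathbf{M}))$. Given $\mathbf{M}$ with $\mathbf{L} = \mathbf{Q}\mathbf{H}$, the canonical form is $f_{p1}(\mathbf{M}) = (\mathbf{A},\mathbf{F},\mathbf{H})$, and we set $f_{p2}(\mathbf{M}, f_{p1}(\mathbf{M})) = \mathbf{Q}$. Then $\mathbf{Q} \cdot (\mathbf{A},\mathbf{F},\mathbf{H}) = (\mathbf{A},\mathbf{F},\mathbf{Q}\mathbf{H}) = (\mathbf{A},\mathbf{F},\mathbf{L}) = \mathbf{M}$, which is precisely the registration identity $\mathbf{M}_1 = g \cdot \mathbf{M}_2$ with $\mathbf{M}_1 = \mathbf{M}$, $\mathbf{M}_2 = f_{p1}(\mathbf{M})$, and $g = \mathbf{Q}$. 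Finally, I would plug these into Proposition \ref{proposition_Unified}: with $C_\mathbf{M} = f_{p1}$ and $R_\mathbf{M} = f_{p2}$, the function $h(\mathbf{M}) = f_{p2}(\mathbf{M}, f_{p1}(\mathbf{M})) \cdot f(f_{p1}(\mathbf{M})) = \mathbf{Q} \cdot f(\mathbf{A},\mathbf{F},\mathbf{H})$ is $O(3)$-equivariant for tensor prediction, which is the content of the proposition — this gives the practical recipe of the R\&R module.

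The main obstacle is the uniqueness argument in step (ii): one must be careful that the uniqueness clause of Theorem \ref{proposition_polar} really applies, which requires the Hermitian factor to be positive \emph{definite} (not merely semi-definite) — and this holds exactly because the lattice matrix of a genuine crystal is invertible, so $\mathbf{H} = (\mathbf{L}^\top \mathbf{L})^{1/2}$ has strictly positive eigenvalues. A secondary subtlety worth a sentence is that Definition \ref{definition_Rigid} as stated quantifies over all pairs $\mathbf{M}_1, \mathbf{M}_2$ in the orbit, whereas $f_{p2}$ is only ever evaluated at the argument $(\mathbf{M}, f_{p1}(\mathbf{M}))$; I would note that the construction naturally extends — for any $\mathbf{M}_1, \mathbf{M}_2$ in the orbit with lattices $\mathbf{L}_1 = \mathbf{Q}_1\mathbf{H}$, $\mathbf{L}_2 = \mathbf{Q}_2\mathbf{H}$ sharing the canonical factor $\mathbf{H}$, the matrix $\mathbf{Q}_1\mathbf{Q}_2^\top$ is the required registration map — so Definition \ref{definition_Rigid} is satisfied in full, but only the evaluation at $(\mathbf{M}, f_{p1}(\mathbf{M}))$ is needed to invoke Proposition \ref{proposition_Unified}.
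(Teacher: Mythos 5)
Your proof is correct and follows essentially the same route as the paper's: both arguments rest on applying polar decomposition to $\mathbf{L}$ and to $\mathbf{Q}'\mathbf{L}$, observing that $\mathbf{Q}'\mathbf{Q}$ is orthogonal, and invoking the uniqueness of the decomposition to conclude that the Hermitian factor $\mathbf{H}$ is unchanged, with the registration identity $\mathbf{M}=\mathbf{Q}\cdot\mathbf{M}_0$ read off directly from $\mathbf{L}=\mathbf{Q}\mathbf{H}$. Your added remarks — that uniqueness genuinely requires the positive \emph{definiteness} guaranteed by invertibility of the lattice matrix, and that the canonical form must be checked to lie in the orbit — are careful refinements the paper leaves implicit, but they do not change the argument.
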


The detailed proof of Proposition \ref{proposition_plolar_Canonical} is provided in Appendix \ref{Proofs}.
Specifically, 
since the polar decomposition of a lattice matrix always exists and it is unique, any orthogonal matrix acting on the lattice matrix can be separated through polar decomposition, yielding a unique $\mathbf{H}$. Thus, $\mathbf{H}$ can serve as the canonical form.
Moreover, using polar decomposition is a continuous canonicalization, which can provide improved robustness \citep{dymequivariant}. For clarity, we provide details on this continuity in Appendix \ref{non_continuous}, and further discussions on constructing alternative canonical forms based on polar decomposition in Appendix \ref{local_frame_design}, respectively. Building upon Proposition \ref{proposition_plolar_Canonical}, we utilize polar decomposition to construct the proposed \textit{GoeCTP} model.

\subsubsection{R\&R Module.}The primary function of the R\&R Module is to rotate the crystal configuration to its canonical form and obtain the corresponding transformation. 
Therefore, 
based on Proposition \ref{proposition_plolar_Canonical},
the R\&R Module directly applies polar decomposition to the lattice matrix $\mathbf{L}$ to construct the rigid registration and the orbit canonicalization. Formally, the R\&R Module can be expressed as:
\begin{equation}
    f_{p1}(\mathbf{M})=(\mathbf{A},\mathbf{F},\mathbf{H}),\text{ }
    f_{p2}(\mathbf{M},f_{p1}(\mathbf{M}))=\mathbf{Q},
\end{equation}
where $f_{p1}(\cdot)$ is the orbit canonicalization function, while $f_{p2}(\cdot)$ is the rigid registration function.
As shown in Fig. \ref{fig:overview}, the crystal configuration $(\mathbf{A},\mathbf{F},\mathbf{H})$ derived from orbit canonicalization is passed to the subsequent crystal graph construction module for further processing. The orthogonal matrix $\mathbf{Q}$ obtained during this decomposition is passed to the reverse R\&R module, ensuring the equivariant transformation of the output tensor properties. 
Our proposed R\&R module based on polar decomposition allows the input crystal data to be transformed into the canonical form that is invariant under $O(3)$ space group transformations.
With this particular module, the equivariance can be captured, meaning that the subsequent components of {\em GoeCTP} are no longer required to account for equivariance.

\subsubsection{Crystal Graph Preprocessing.}
To enable neural networks to effectively process infinite crystal structures $(\mathbf{A}, \mathbf{F}, \mathbf{H})$, it is necessary to employ crystal graph construction and node and edge feature embedding techniques that represent atomic interactions within a finite graph representation.
In this work, we adopt the crystal graph construction approach introduced by \citet{yancomplete, yanspace}, where atoms are represented as nodes and interatomic interactions as edges. The node and edge features are subsequently embedded into vector representations $\boldsymbol{f}_{i}$ and $\boldsymbol{f}_{ij}^e$ to capture atomic and geometric information.
A detailed description of the crystal graph construction procedure and the feature embedding scheme is provided in Appendix~\ref{ecomformer}.

\subsubsection{Prediction Module.}Since the R\&R module is responsible for preserving equivariance, any predictive network can serve as the Prediction module, such as those proposed by \citet{yancomplete, taniaicrystalformer, yan2022periodic, lee2024density, wang2024conformal,ito2025rethinking}, among others. In this work, we select eComFormer~\citep{yancomplete} as the Prediction module to illustrate the overall workflow of our proposed method. A detailed explanation of eComFormer can be found in Appendix \ref{ecomformer}. Once processed through the stacked layers of eComFormer, the node features are aggregated to generate the crystal’s global features as follows:
$\boldsymbol{G}^\mathrm{final}=\frac1n\sum_{1\leq i\leq n}\boldsymbol{f}_i^\mathrm{final}$.

\subsubsection{Reverse R\&R Module.} The primary function of the Reverse R\&R module is to generate the equivariant tensor property predictions based on the crystal global features from the Prediction module. 
First, the Reverse R\&R module transforms $\boldsymbol{G}^\mathrm{final}$ into a tensor output, as follows:
\begin{equation}
\boldsymbol{\varepsilon}=f_{MLP}(\boldsymbol{G}^\mathrm{final})
\end{equation}
\begin{equation}
\boldsymbol{\varepsilon}^\mathrm{final}=f_{rp}(\boldsymbol{\varepsilon},\mathbf{Q}),
\end{equation}
where $f_{MLP}(\cdot)$ represents a multilayer perceptron (MLP) and the operation reshaping dimension. Next, $f_{rp}(\cdot)$ utilizes the orthogonal matrix $\mathbf{Q}$ obtained from the R\&R Module to convert the tensor output $\boldsymbol{\varepsilon}$ into its final equivariant form denoted as $\boldsymbol{\varepsilon}^\mathrm{final}$.
This conversion $f_{rp}(\cdot)$ for predicting the dielectric tensor can be expressed by: 
\begin{equation}
\label{equation11}
\boldsymbol{\varepsilon}^\mathrm{final}=\mathbf{Q}\boldsymbol{\varepsilon}\mathbf{Q}^\top. 
\end{equation}
For predicting the higher-order piezoelectric and elastic tensor, the conversion process becomes more complex, see Appendix \ref{tensor_equivariance} for more details.

\section{Experiments}
\label{Experiments}
\subsection{Experimental setup}

\begin{table}[b]
    \centering
\resizebox{0.48\textwidth}{!}{ 
    \begin{tabular}{c|c |c |c|c }
        \toprule

        Dataset &Sample size &Fnorm Mean  & Fnorm STD  &Unit \\       
        \midrule
        Dielectric& 4713 &14.7 & 18.2 & Unitless \\

        Piezoelectric&  2701 & 0.79 & 4.03 & $\text{C}/\text{m}^2$ \\
        Elastic& 25110  & 306.4 & 238.4 & GPa \\

        \bottomrule
    \end{tabular}
    }
    \caption{Dataset statistics.}
       \label{Dataset_statistics}
\end{table}

\subsubsection{Datasets.}In this work, we evaluate the performance of {\em GoeCTP} on three key tensor property prediction tasks: the second-order dielectric tensor, the third-order piezoelectric tensor, and the fourth-order elastic tensor, respectively. 
The datasets for dielectric tensor and piezoelectric tensor prediction are derived from the data curated by \citet{yanspace}, which originates from the JARVIS-DFT database.
Since nearly half of the crystal samples in the piezoelectric dataset have piezoelectric tensor labels that are entirely zero, training {\em GoeCTP} directly on this dataset can lead to severe overfitting.
Therefore, we filtered out all crystal samples with zero-valued piezoelectric tensor labels from piezoelectric dataset.
For the elastic tensor prediction task, the dataset is also obtained from the JARVIS-DFT database and is publicly accessible as \texttt{dft\_3d} through the \texttt{jarvis-tools} package \footnote{\url{https://pages.nist.gov/jarvis/databases/}}.
The statistical details of the datasets are presented in Table \ref{Dataset_statistics}.
Further details of the dataset and experimental setup can be found in Appendix \ref{Experimental_details}.

\subsubsection{Baseline Methods.}We selected several state-of-the-art approaches in the field of crystal tensor property prediction, including EGTNN \citep{zhong2023general} and GMTNet \citep{yanspace}, as baseline methods.
Moreover, since {\em GoeCTP} is designed as a flexible framework, we further evaluated its performance when combined with different backbone models. In addition to the eComFormer~\citep{yancomplete} used in Figure \ref{fig:overview}, we also incorporated the recently proposed CrystalFramer~\citep{ito2025rethinking}.

\subsubsection{Evaluation Metrics.}We followed the evaluation metrics defined by \citet{yanspace} to assess the performance of the methods. The following metrics were employed: (1) \textbf{Frobenius norm (Fnorm)} is used to measure the difference between the predicted tensor and the label tensor, which is the square root of the sum of the squares of all elements in a tensor. 
(2) \textbf{Error within threshold (EwT)} is determined by the ratio of the Fnorm between the predicted tensor and the ground truth tensor to the Fnorm of the ground truth tensor. This ratio can be expressed as ${{||{y_{pred}} - {y_{label}}|{|_F}} \mathord{\left/
 {\vphantom {{||{y_{pred}} - {y_{label}}|{|_F}} {||{y_{label}}|{|_F}}}} \right.
 \kern-\nulldelimiterspace} {||{y_{label}}|{|_F}}}$, where ${||\cdot|{|_F}}$ is Fnorm, and $y_{label}$ and $y_{pred}$ represent the ground truth and predicted values, respectively. For instance, EwT 25\% indicates that the proportion of predicted samples with this ratio is below 25\%. Higher values of EwT signify better prediction quality.
In our experiments, we utilized several thresholds for EwT: EwT 25\%, EwT 10\%, and EwT 5\%.

\subsection{Experimental Results}

\subsubsection{Predicting Dielectric Tensors.}
The performance of various models in predicting the dielectric tensor is summarized in Table \ref{tab:main_results}.  
Since {\em GoeCTP (C. Fra.)} employs a more advanced backbone, its performance is notably superior to that of {\em GoeCTP (C. eCom.)}. Both {\em GoeCTP (eCom.)} and {\em GoeCTP (C. Fra.)} outperform the baseline methods, demonstrating the effectiveness of the proposed {\em GoeCTP} framework.
Comprehensive results combining {\em GoeCTP} with additional backbone architectures are provided in Appendix \ref{5_times}. 
In addition to these experimental results, we further investigate the impact of different canonical forms on tensor property prediction in crystalline materials. Detailed analyses can be found in Appendix \ref{local_frame}. 
\begin{table}[h]
    \centering
        \resizebox{0.48\textwidth}{!}{ 
\begin{tabular}{l|cc|cc}
\toprule
 & ETGNN & GMTNet & \textbf{GoeCTP (eCom.)}& \textbf{GoeCTP (C. Fra.)} \\
\midrule
Fnorm $\downarrow$         & 3.40 & 3.28 & 3.23& \textbf{3.05} \\
EwT 25\% $\uparrow$         & 82.6\% & 83.3\% & 83.2\% & \textbf{86.4\%}\\
EwT 10\% $\uparrow$       & 49.1\% & 56.0\% & 56.8\% & \textbf{62.6\%}\\
EwT 5\% $\uparrow$         & 25.3\% & 30.5\% & 35.5\%& \textbf{43.5\%}\\
\bottomrule
\end{tabular}
}
\caption{Comparison on the dielectric dataset. 
Lower Fnorm and higher EwT indicate better performance. {\em GoeCTP (eCom.)} denotes GoeCTP using eComFormer as the backbone, while {\em GoeCTP (C. Fra.)} refers to {\em GoeCTP} with CrystalFramer as the backbone.}
\label{tab:main_results}
\end{table}

\subsubsection{Predicting Piezoelectric Tensors.}The experimental results for the piezoelectric tensor dataset are shown in Table \ref{tab:piezoelectric_results}. 

Compared to the results on the dielectric dataset, predicting the piezoelectric tensor presents a greater challenge.
As shown in Table \ref{Dataset_statistics}, the overall mean Fnorm of the piezoelectric tensor is relatively small, which leads to lower overall EwT values. Consequently, achieving higher EwT requires more accurate predictions from the model.
{\em GoeCTP (C. Fra.)} achieves relatively high EwT at both 5\% and 10\%, indicating that it attains high prediction accuracy for a subset of samples.
Meanwhile, GMTNet and {\em GoeCTP (eCom.)} obtain better Fnorm results, suggesting that their overall prediction errors are comparatively smaller.

\begin{table}[h]
    \centering
        \resizebox{0.48\textwidth}{!}{ 
\begin{tabular}{l|cc|cc}
\toprule
 & ETGNN & GMTNet & \textbf{GoeCTP (eCom.)}& \textbf{GoeCTP (C. Fra.)} \\
\midrule
Fnorm $\downarrow$   & 0.873 &\textbf{0.752}  & 0.778 & 0.827\\
EwT 25\% $\uparrow$  & 0\% &\textbf{ 6.29\%} & 2.59\% & 3.33\%\\
EwT 10\% $\uparrow$   & 0\% & 1.48\% & 1.14\% & \textbf{2.59\%}\\
EwT 5\% $\uparrow$    & 0\% & 1.11\% & 0.04\% & \textbf{2.22\%}\\
\bottomrule
\end{tabular}
}
\caption{Comparison on the piezoelectric dataset. }
\label{tab:piezoelectric_results}
\end{table}

\subsubsection{Predicting Elastic Tensors.}Experimental results on the elastic tensor dataset are summarized in Table \ref{tab:elastic_results}. Consistent with the observations on the dielectric dataset, both {\em GoeCTP (eCom.)} and {\em GoeCTP (C. Fra.)} achieved outstanding performance in predicting higher-order, complex tensors, surpassing all baseline methods across all evaluation metrics. Notably, {\em GoeCTP (C. Fra.)} delivered a further improvement over {\em GoeCTP (eCom.)}. Additional results demonstrating {\em GoeCTP} combined with other backbone models are provided in Appendix \ref{5_times}.

\begin{table}[h]
    \centering
        \resizebox{0.48\textwidth}{!}{ 
\begin{tabular}{l|cc|cc}
\toprule
 & ETGNN & GMTNet & \textbf{GoeCTP (eCom.)}& \textbf{GoeCTP (C. Fra.)} \\
\midrule
Fnorm $\downarrow$    & 123.64 & 117.62 & 107.11&  \textbf{95.98}\\
EwT 25\% $\uparrow$   & 32.0\% & 36.0\% & 42.5\%& \textbf{49.7\%} \\
EwT 10\% $\uparrow$    & 3.8\%  & 7.6\%  & 15.3\%& \textbf{20.3\%}\\
EwT 5\% $\uparrow$   & 0.5\%  & 2.0\%  & 7.2\%& \textbf{9.8\%}\\
\bottomrule
\end{tabular}
}
\caption{Comparison on the elastic dataset. }
\label{tab:elastic_results}
\end{table}

\begin{table*}[t]
    \centering
        \resizebox{0.7\textwidth}{!}{ 
\begin{tabular}{lcccc}
\toprule
& eCom. (ori.) & eCom. (aug.) & \textbf{GoeCTP (ori.)} & \textbf{GoeCTP (aug.)} \\
\midrule
Dielectric: Fnorm $\downarrow$ & 3.23 & 4.71 & 3.23 & 3.23 \\
Dielectric: Test Time (s) $\downarrow$ & 26.03 & 26.01 & 26.23 & 26.18 \\
\midrule
Piezoelectric: Fnorm $\downarrow$ & 0.778 & 0.957 & 0.778 & 0.778 \\
Piezoelectric: Test Time (s) $\downarrow$ & 5.21 & 5.40 &5.53  & 6.05 \\
\midrule
Elastic: Fnorm $\downarrow$ & 107.11 & 138.45 & 107.11 & 107.11 \\
Elastic: Test Time (s) $\downarrow$ & 83.26 & 83.02 & 90.10 & 89.60 \\
\bottomrule
\end{tabular}
}
\caption{Ablation study for verifying the $O(3)$-equivariance using dielectric, piezoelectric, and elastic datasets. After training {\em GoeCTP (eCom.)}, we extracted its prediction module (i.e., eComFormer) for comparative testing on two different test sets.}
\label{Ablation_table}
\end{table*}

\subsubsection{Verifying the $O(3)$ Equivariance.}
To evaluate the effectiveness of {\em GoeCTP}, we conducted experiments to verify the $O(3)$ equivariance of tensor properties. Specifically, after training {\em GoeCTP (eCom.)}, we extracted its Prediction module (i.e., eComFormer) for comparative testing on two different test sets (original test set and augmented test set). All crystals in the original test set were adjusted to the canonical form, while the augmented test set was generated by applying arbitrary $O(3)$ group transformations to all crystals in the original test set. The method for generating the corresponding orthogonal matrices is from \citet{heiberger1978algorithm}.
We then evaluated both the Prediction module (eComFormer) and {\em GoeCTP (eCom.)} on these two datasets and compared the performance metrics.

The results on the three tensor datasets are shown in 
Table
\ref{Ablation_table}, 
{\em GoeCTP} performed equally well on the augmented test set as on the original test set in all tensor datasets, indicating that it maintains strong $O(3)$ equivariance for tensor properties. In contrast, the performance of the eComFormer method significantly declined on the augmented test set, demonstrating that it does not meet the $O(3)$ equivariance requirements for tensor properties. 
Additionally, runtime comparisons between {\em GoeCTP} and eComFormer on the test set revealed that {\em GoeCTP} did not result in a significant increase in runtime. This demonstrates that when integrated into scalar property prediction networks, {\em GoeCTP} incurs almost no additional computational cost,  enhancing its practicality while maintaining efficiency.

\begin{table}[h]
    \centering
        \resizebox{0.48\textwidth}{!}{ 
\begin{tabular}{l|cc|cc}

\bottomrule

\toprule
Dielectric:& ETGNN & GMTNet & \textbf{GoeCTP (eCom.)}& \textbf{GoeCTP (C. Fra.)}\\
\midrule
Total Time (s) $\downarrow$  & 1325 & 1611 & \textbf{616}& 1976\\
Time/epoch (s) $\downarrow$  & 6.625 & 8.055 & \textbf{3.08}& 9.88\\
\bottomrule

\toprule
Piezoelectric:& ETGNN & GMTNet & \textbf{GoeCTP (eCom.)}& \textbf{GoeCTP (C. Fra.)}\\
\midrule
Total Time (s) $\downarrow$  & 736 & 7012 & \textbf{356} & 5768\\
Time/epoch (s) $\downarrow$  & 2.18 & 35.06 & \textbf{1.78}& 5.77\\
\bottomrule

\toprule
Elastic:&  ETGNN & GMTNet & \textbf{GoeCTP (eCom.)}& \textbf{GoeCTP (C. Fra.)}\\
\midrule
Total Time (s) $\downarrow$  & 4448 & $>$\,36000 & \textbf{2422}& 17197\\
Time/epoch (s) $\downarrow$  & 22.24 & $>$\,180  & \textbf{12.11}& 34.394\\

\bottomrule

\toprule
\end{tabular}
}
\caption{Efficiency comparison.}
\label{tab:efficiency}
\end{table}

\subsubsection{Efficiency.}Table \ref{tab:efficiency} summarizes the training efficiency of {\em GoeCTP} and baseline methods. On the dielectric, piezoelectric, and elastic datasets, {\em GoeCTP (eCom.)} required only 38.2\%, 5\%, and 7\% of GMTNet’s training time, respectively. Although {\em GoeCTP (C. Fra.)} achieved higher accuracy at the cost of longer training, it still maintained a efficiency advantage over GMTNet, particularly on high-order tensor datasets such as the elastic dataset.

To achieve $O(3)$ equivariance for tensor properties, GMTNet employs a network architecture based on irreducible representations and tensor operations. Moreover, its final tensor output is obtained through gradient computation, which may significantly reduce computational efficiency. This inefficiency becomes particularly pronounced when computing gradients for high-order tensors such as elastic tensors. Beyond training costs, the gradient computation also may lead to slower inference. For instance, in the elastic tensor task, GMTNet requires more than 0.478 seconds per material, whereas {\em GoeCTP (eCom.)} requires only 0.036 seconds per material. In contrast, {\em GoeCTP} achieves equivariant tensor predictions without any specialized architectural constraints. It utilizes a simple multilayer perceptron (MLP) at the network output to predict tensor properties of various orders, ensuring high efficiency and scalability across different tensor orders.

\section{Conclusion}

In this work, we revisit crystal tensor property prediction through the lens of canonicalization. Built upon the notion of canonicalization, we proposed a novel $O(3)$-equivariant framework {\em GoeCTP} for fast and accurate crystal tensor prediction. Benefiting from the canonicalization mechanism, {\em GoeCTP} serves as a plug-and-play module that can be integrated with any existing scalar property prediction network, enabling it to predict tensor properties with negligible additional computational cost.

Except for the excel performance, our current {\em GoeCTP} has some limitations remain: 
\begin{enumerate}
    \item {\em GoeCTP} is a plug-and-play $O(3)$-equivariant framework designed to enhance the backbone network’s capability for equivariant prediction. Consequently, its overall performance inherently depends on the strength of the chosen backbone. For example, {\em GoeCTP (eCom.)} demonstrates higher efficiency, while {\em GoeCTP (C. Fra.)} achieves greater predictive accuracy. Striking an optimal balance between efficiency and accuracy remains a topic for further investigation. 
    \item As demonstrated in \citet{higham1986computing} and Theorem \ref{proposition_polar}, polar decomposition applied to an invertible matrix $\mathbf{L}$ produces a unique canonical form $\mathbf{H}$. For 3D crystal structures, the lattice matrix $\mathbf{L}$ is always full rank (i.e., invertible), ensuring the applicability of our approach to 3D crystal systems. However, in special cases such as 2D single-layer crystals \citep{novoselov2005two, sherrell2022bright}, the lattice matrix $\mathbf{L}$ may be rank-deficient. In these scenarios, polar decomposition may fail to yield a unique canonical form $\mathbf{H}$, making it impossible to consistently align crystals with different spatial orientations to a common canonical form, thereby hindering $O(3)$-equivariance.
    \item The current framework underutilizes prior knowledge related to space group constraints and does not yet ensure that predicted tensor results strictly adhere to space group symmetry constraints. In future work, we plan to improve model performance by integrating priors concerning the independent tensor components across different crystal systems (see Appendix \ref{Voigt}). Further discussion on incorporating space group symmetry constraints into tensor property prediction is provided in Appendix \ref{symmetry_utilize}.
\end{enumerate}


\section{Acknowledgments}
This work was partially supported by the Research Grants Counil (RGC) of the Hong Kong (HK) SAR (Grant No. 15208725 and 15208222), the Young Scientists Fund of National Natural Science Foundation of China (NSFC) (Grant No. 62206235), and the Hong Kong Polytechnic University (Grant No. A0046682 and P0057774).

\bibliography{aaai2026}

@inproceedings{yanspace,
  title={A space group symmetry informed network for o (3) equivariant crystal tensor prediction},
  author={Yan, Keqiang and Saxton, Alexandra and Qian, Xiaofeng and Qian, Xiaoning and Ji, Shuiwang},
  booktitle={Forty-first International Conference on Machine Learning},
  year={2024}
}

@article{zhong2023general,
  title={A general tensor prediction framework based on graph neural networks},
  author={Zhong, Yang and Yu, Hongyu and Gong, Xingao and Xiang, Hongjun},
  journal={The Journal of Physical Chemistry Letters},
  volume={14},
  number={28},
  pages={6339--6348},
  year={2023},
  publisher={ACS Publications}
}

@article{petousis2016benchmarking,
  title={Benchmarking density functional perturbation theory to enable high-throughput screening of materials for dielectric constant and refractive index},
  author={Petousis, Ioannis and Chen, Wei and Hautier, Geoffroy and Graf, Tanja and Schladt, Thomas D and Persson, Kristin A and Prinz, Fritz B},
  journal={Physical Review B},
  volume={93},
  number={11},
  pages={115151},
  year={2016},
  publisher={APS}
}

@book{nye1985physical,
  title={Physical properties of crystals: their representation by tensors and matrices},
  author={Nye, John Frederick},
  year={1985},
  publisher={Oxford university press}
}

@article{resta1994macroscopic,
  title={Macroscopic polarization in crystalline dielectrics: the geometric phase approach},
  author={Resta, Raffaele},
  journal={Reviews of modern physics},
  volume={66},
  number={3},
  pages={899},
  year={1994},
  publisher={APS}
}

@article{xie2018crystal,
  title={Crystal graph convolutional neural networks for an accurate and interpretable prediction of material properties},
  author={Xie, Tian and Grossman, Jeffrey C},
  journal={Physical review letters},
  volume={120},
  number={14},
  pages={145301},
  year={2018},
  publisher={APS}
}

@article{choudhary2021atomistic,
  title={Atomistic line graph neural network for improved materials property predictions},
  author={Choudhary, Kamal and DeCost, Brian},
  journal={npj Computational Materials},
  volume={7},
  number={1},
  pages={185},
  year={2021},
  publisher={Nature Publishing Group UK London}
}

@article{chen2019graph,
  title={Graph networks as a universal machine learning framework for molecules and crystals},
  author={Chen, Chi and Ye, Weike and Zuo, Yunxing and Zheng, Chen and Ong, Shyue Ping},
  journal={Chemistry of Materials},
  volume={31},
  number={9},
  pages={3564--3572},
  year={2019},
  publisher={ACS Publications}
}

@article{louis2020graph,
  title={Graph convolutional neural networks with global attention for improved materials property prediction},
  author={Louis, Steph-Yves and Zhao, Yong and Nasiri, Alireza and Wang, Xiran and Song, Yuqi and Liu, Fei and Hu, Jianjun},
  journal={Physical Chemistry Chemical Physics},
  volume={22},
  number={32},
  pages={18141--18148},
  year={2020},
  publisher={Royal Society of Chemistry}
}

@inproceedings{lin2023efficient,
  title={Efficient approximations of complete interatomic potentials for crystal property prediction},
  author={Lin, Yuchao and Yan, Keqiang and Luo, Youzhi and Liu, Yi and Qian, Xiaoning and Ji, Shuiwang},
  booktitle={International Conference on Machine Learning},
  pages={21260--21287},
  year={2023},
  organization={PMLR}
}

@inproceedings{das2023crysgnn,
  title={Crysgnn: Distilling pre-trained knowledge to enhance property prediction for crystalline materials},
  author={Das, Kishalay and Samanta, Bidisha and Goyal, Pawan and Lee, Seung-Cheol and Bhattacharjee, Satadeep and Ganguly, Niloy},
  booktitle={Proceedings of the AAAI Conference on Artificial Intelligence},
  volume={37},
  number={6},
  pages={7323--7331},
  year={2023}
}

@inproceedings{loshchilovdecoupled,
  title={Decoupled Weight Decay Regularization},
  author={Loshchilov, Ilya and Hutter, Frank},
  booktitle={International Conference on Learning Representations},
  year={2018}
}

@inproceedings{wang2024conformal,
  title={Conformal crystal graph transformer with robust encoding of periodic invariance},
  author={Wang, Yingheng and Kong, Shufeng and Gregoire, John M and Gomes, Carla P},
  booktitle={Proceedings of the AAAI Conference on Artificial Intelligence},
  volume={38},
  number={1},
  pages={283--291},
  year={2024}
}

@article{lee2024density,
  title={Density of states prediction of crystalline materials via prompt-guided multi-modal transformer},
  author={Lee, Namkyeong and Noh, Heewoong and Kim, Sungwon and Hyun, Dongmin and Na, Gyoung S and Park, Chanyoung},
  journal={Advances in Neural Information Processing Systems},
  volume={36},
  year={2024}
}

@inproceedings{huang2020improving,
  title={Improving transformer optimization through better initialization},
  author={Huang, Xiao Shi and Perez, Felipe and Ba, Jimmy and Volkovs, Maksims},
  booktitle={International Conference on Machine Learning},
  pages={4475--4483},
  year={2020},
  organization={PMLR}
}

@article{heiberger1978algorithm,
  title={Algorithm AS 127: Generation of random orthogonal matrices},
  author={Heiberger, Richard M},
  journal={Applied Statistics},
  pages={199--206},
  year={1978},
  publisher={JSTOR}
}

@article{morita2020modeling,
  title={Modeling the dielectric constants of crystals using machine learning},
  author={Morita, Kazuki and Davies, Daniel W and Butler, Keith T and Walsh, Aron},
  journal={The Journal of Chemical Physics},
  volume={153},
  number={2},
  year={2020},
  publisher={AIP Publishing}
}

@article{itin2013constitutive,
  title={The constitutive tensor of linear elasticity: its decompositions, Cauchy relations, null Lagrangians, and wave propagation},
  author={Itin, Yakov and Hehl, Friedrich W},
  journal={Journal of Mathematical Physics},
  volume={54},
  number={4},
  year={2013},
  publisher={AIP Publishing}
}

@article{yan2022periodic,
  title={Periodic graph transformers for crystal material property prediction},
  author={Yan, Keqiang and Liu, Yi and Lin, Yuchao and Ji, Shuiwang},
  journal={Advances in Neural Information Processing Systems},
  volume={35},
  pages={15066--15080},
  year={2022}
}

@inproceedings{taniaicrystalformer,
  title={Crystalformer: Infinitely connected attention for periodic structure encoding},
  author={Taniai, Tatsunori and Igarashi, Ryo and Suzuki, Yuta and Chiba, Naoya and Saito, Kotaro and Ushiku, Yoshitaka and Ono, Kanta},
  booktitle={The Twelfth International Conference on Learning Representations},
  year={2024}
}

@inproceedings{yancomplete,
  title={Complete and efficient graph transformers for crystal material property prediction},
  author={Yan, Keqiang and Fu, Cong and Qian, Xiaofeng and Qian, Xiaoning and Ji, Shuiwang},
  booktitle={The Twelfth International Conference on Learning Representations},
  year={2024}
}

@article{wang2024crystalline,
  title={Crystalline material discovery in the era of artificial intelligence},
  author={Wang, Zhenzhong and Hua, Haowei and Lin, Wanyu and Yang, Ming and Tan, Kay Chen},
  journal={arXiv preprint arXiv:2408.08044},
  year={2024}
}

@inproceedings{jiaospace,
  title={Space group constrained crystal generation},
  author={Jiao, Rui and Huang, Wenbing and Liu, Yu and Zhao, Deli and Liu, Yang},
  booktitle={The Twelfth International Conference on Learning Representations},
  year={2024}
}

@inproceedings{cenuniversally,
  title={Universally invariant learning in equivariant GNNs},
  author={Cen, Jiacheng and Li, Anyi and Lin, Ning and Xu, Tingyang and Rong, Yu and Zhao, Deli and Wang, Zihe and Huang, Wenbing},
  booktitle={The Thirty-ninth Annual Conference on Neural Information Processing Systems},
  year={2025}
}

@inproceedings{pacini2025universality,
  title={On universality classes of equivariant networks},
  author={Pacini, Marco and Santin, Gabriele and Lepri, Bruno and Trivedi, Shubhendu},
  booktitle={The Thirty-ninth Annual Conference on Neural Information Processing Systems},
  year={2025}
}

@inproceedings{manolache2025learning,
  title={Learning (approximately) equivariant networks via constrained optimization},
  author={Manolache, Andrei and Chamon, Luiz FO and Niepert, Mathias},
  booktitle={The Thirty-ninth Annual Conference on Neural Information Processing Systems},
  year={2025}
}

@inproceedings{xie2025tale,
  title={A tale of two symmetries: Exploring the loss landscape of equivariant models},
  author={Xie, YuQing and Smidt, Tess},
  booktitle={The Thirty-ninth Annual Conference on Neural Information Processing Systems},
  year={2025}
}

@inproceedings{linequivariance,
  title={Equivariance via minimal frame averaging for more symmetries and efficiency},
  author={Lin, Yuchao and Helwig, Jacob and Gui, Shurui and Ji, Shuiwang},
  booktitle={Forty-first International Conference on Machine Learning},
  year={2024}
}

@inproceedings{ma2024canonicalization,
  title={A canonicalization perspective on invariant and equivariant learning},
  author={Ma, George and Wang, Yifei and Lim, Derek and Jegelka, Stefanie and Wang, Yisen},
  booktitle={The Thirty-eighth Annual Conference on Neural Information Processing Systems},
  year={2024}
}

@article{kabsch1976solution,
  title={A solution for the best rotation to relate two sets of vectors},
  author={Kabsch, Wolfgang},
  journal={Acta Crystallographica Section A: Crystal Physics, Diffraction, Theoretical and General Crystallography},
  volume={32},
  number={5},
  pages={922--923},
  year={1976},
  publisher={International Union of Crystallography}
}

@article{lawrence2019purely,
  title={A purely algebraic justification of the Kabsch-Umeyama algorithm},
  author={Lawrence, Jim and Bernal, Javier and Witzgall, Christoph},
  journal={Journal of research of the National Institute of Standards and Technology},
  volume={124},
  pages={1},
  year={2019},
  publisher={National Institute of Standards and Technology}
}

@inproceedings{dymequivariant,
  title={Equivariant frames and the impossibility of continuous canonicalization},
  author={Dym, Nadav and Lawrence, Hannah and Siegel, Jonathan W},
  booktitle={Forty-first International Conference on Machine Learning},
  year={2024}
}

@inproceedings{duval2023faenet,
  title={Faenet: Frame averaging equivariant gnn for materials modeling},
  author={Duval, Alexandre Agm and Schmidt, Victor and Hern{\'a}ndez-Garc{\i}a, Alex and Miret, Santiago and Malliaros, Fragkiskos D and Bengio, Yoshua and Rolnick, David},
  booktitle={International Conference on Machine Learning},
  pages={9013--9033},
  year={2023},
  organization={PMLR}
}

@article{martinkus2023abdiffuser,
  title={Abdiffuser: full-atom generation of in-vitro functioning antibodies},
  author={Martinkus, Karolis and Ludwiczak, Jan and Liang, Wei-Ching and Lafrance-Vanasse, Julien and Hotzel, Isidro and Rajpal, Arvind and Wu, Yan and Cho, Kyunghyun and Bonneau, Richard and Gligorijevic, Vladimir and others},
  journal={Advances in Neural Information Processing Systems},
  volume={36},
  pages={40729--40759},
  year={2023}
}

@inproceedings{kaba2023equivariance,
  title={Equivariance with learned canonicalization functions},
  author={Kaba, S{\'e}kou-Oumar and Mondal, Arnab Kumar and Zhang, Yan and Bengio, Yoshua and Ravanbakhsh, Siamak},
  booktitle={International Conference on Machine Learning},
  pages={15546--15566},
  year={2023},
  organization={PMLR}
}

@inproceedings{punyframe,
  title={Frame averaging for invariant and equivariant network design},
  author={Puny, Omri and Atzmon, Matan and Smith, Edward J and Misra, Ishan and Grover, Aditya and Ben-Hamu, Heli and Lipman, Yaron},
  booktitle={International Conference on Learning Representations},
  year={2022}
}

@book{hall2013lie,
  title={Lie groups, Lie algebras, and representations},
  author={Hall, Brian C and Hall, Brian C},
  year={2013},
  publisher={Springer}
}

@article{higham1986computing,
  title={Computing the polar decomposition—with applications},
  author={Higham, Nicholas J},
  journal={SIAM Journal on Scientific and Statistical Computing},
  volume={7},
  number={4},
  pages={1160--1174},
  year={1986},
  publisher={SIAM}
}

@article{novoselov2005two,
  title={Two-dimensional atomic crystals},
  author={Novoselov, Kostya S and Jiang, Da and Schedin, F and Booth, TJ and Khotkevich, VV and Morozov, SV and Geim, Andre K},
  journal={Proceedings of the National Academy of Sciences},
  volume={102},
  number={30},
  pages={10451--10453},
  year={2005},
  publisher={National Acad Sciences}
}

@article{sherrell2022bright,
  title={A bright future for engineering piezoelectric 2D crystals},
  author={Sherrell, Peter C and Fronzi, Marco and Shepelin, Nick A and Corletto, Alexander and Winkler, David A and Ford, Mike and Shapter, Joseph G and Ellis, Amanda V},
  journal={Chemical Society Reviews},
  volume={51},
  number={2},
  pages={650--671},
  year={2022},
  publisher={Royal Society of Chemistry}
}

@article{geiger2022e3nn,
  title={e3nn: Euclidean neural networks},
  author={Geiger, Mario and Smidt, Tess},
  journal={arXiv preprint arXiv:2207.09453},
  year={2022}
}

@inproceedings{hoogeboom2022equivariant,
  title={Equivariant diffusion for molecule generation in 3d},
  author={Hoogeboom, Emiel and Satorras, V{\i}ctor Garcia and Vignac, Cl{\'e}ment and Welling, Max},
  booktitle={International conference on machine learning},
  pages={8867--8887},
  year={2022},
  organization={PMLR}
}

@article{song2024equivariant,
  title={Equivariant flow matching with hybrid probability transport for 3d molecule generation},
  author={Song, Yuxuan and Gong, Jingjing and Xu, Minkai and Cao, Ziyao and Lan, Yanyan and Ermon, Stefano and Zhou, Hao and Ma, Wei-Ying},
  journal={Advances in Neural Information Processing Systems},
  volume={36},
  year={2024}
}

@inproceedings{ito2025rethinking,
  title={Rethinking the role of frames for SE (3)-invariant crystal structure modeling},
  author={Ito, Yusei and Taniai, Tatsunori and Igarashi, Ryo and Ushiku, Yoshitaka and Ono, Kanta},
  booktitle={The Thirteenth International Conference on Learning Representations},
  year={2025}
}

@inproceedings{huangcode,
  title={Code-Generated graph representations using multiple LLM agents for material properties Prediction},
  author={Huang, Jiao and Xing, Qianli and Ji, Jinglong and Yang, Bo},
  booktitle={Forty-second International Conference on Machine Learning},
  year={2025}
}

@inproceedings{hua2025local,
  title={Local-global associative frames for symmetry-preserving crystal structure modeling},
  author={Hua, Haowei and Lin, Wanyu},
  booktitle={The Thirty-ninth Annual Conference on Neural Information Processing Systems},
  year={2025}
}

@article{cen2024high,
  title={Are high-degree representations really unnecessary in equivariant graph neural networks?},
  author={Cen, Jiacheng and Li, Anyi and Lin, Ning and Ren, Yuxiang and Wang, Zihe and Huang, Wenbing},
  journal={Advances in Neural Information Processing Systems},
  volume={37},
  pages={26238--26266},
  year={2024}
}

@article{han2025survey,
  title={A survey of geometric graph neural networks: Data structures, models and applications},
  author={Han, Jiaqi and Cen, Jiacheng and Wu, Liming and Li, Zongzhao and Kong, Xiangzhe and Jiao, Rui and Yu, Ziyang and Xu, Tingyang and Wu, Fandi and Wang, Zihe and others},
  journal={Frontiers of Computer Science},
  volume={19},
  number={11},
  pages={1911375},
  year={2025},
  publisher={Springer}
}

@inproceedings{aykent2025gotennet,
  title={Gotennet: Rethinking efficient 3d equivariant graph neural networks},
  author={Aykent, Sarp and Xia, Tian},
  booktitle={The Thirteenth International Conference on Learning Representations},
  year={2025}
}

@inproceedings{xuequivariant,
  title={Equivariant graph neural operator for modeling 3D dynamics},
  author={Xu, Minkai and Han, Jiaqi and Lou, Aaron and Kossaifi, Jean and Ramanathan, Arvind and Azizzadenesheli, Kamyar and Leskovec, Jure and Ermon, Stefano and Anandkumar, Anima},
  booktitle={Forty-first International Conference on Machine Learning},
  year={2024}
}

@inproceedings{zheng2024relaxing,
  title={Relaxing continuous constraints of equivariant graph neural networks for broad physical dynamics learning},
  author={Zheng, Zinan and Liu, Yang and Li, Jia and Yao, Jianhua and Rong, Yu},
  booktitle={Proceedings of the 30th ACM SIGKDD Conference on Knowledge Discovery and Data Mining},
  pages={4548--4558},
  year={2024}
}

@article{heilman2024equivariant,
  title={Equivariant graph neural networks for prediction of tensor material properties of crystals},
  author={Heilman, Alex and Schlesinger, Claire and Yan, Qimin},
  journal={arXiv preprint arXiv:2406.03563},
  year={2024}
}

@article{wen2024equivariant,
  title={An equivariant graph neural network for the elasticity tensors of all seven crystal systems},
  author={Wen, Mingjian and Horton, Matthew K and Munro, Jason M and Huck, Patrick and Persson, Kristin A},
  journal={Digital Discovery},
  volume={3},
  number={5},
  pages={869--882},
  year={2024},
  publisher={Royal Society of Chemistry}
}

@article{pakornchote2023straintensornet,
  title={StrainTensorNet: Predicting crystal structure elastic properties using SE (3)-equivariant graph neural networks},
  author={Pakornchote, Teerachote and Ektarawong, Annop and Chotibut, Thiparat},
  journal={Physical Review Research},
  volume={5},
  number={4},
  pages={043198},
  year={2023},
  publisher={APS}
}

@article{lou2024discovery,
  title={Discovery of highly anisotropic dielectric crystals with equivariant graph neural networks},
  author={Lou, Yuchen and Ganose, Alex M},
  journal={arXiv preprint arXiv:2405.07915},
  year={2024}
}

@article{tam2012registration,
  title={Registration of 3D point clouds and meshes: A survey from rigid to nonrigid},
  author={Tam, Gary KL and Cheng, Zhi-Quan and Lai, Yu-Kun and Langbein, Frank C and Liu, Yonghuai and Marshall, David and Martin, Ralph R and Sun, Xian-Fang and Rosin, Paul L},
  journal={IEEE transactions on visualization and computer graphics},
  volume={19},
  number={7},
  pages={1199--1217},
  year={2012},
  publisher={IEEE}
}

@article{mao2024dielectric,
  title={Dielectric tensor prediction for inorganic materials using latent information from preferred potential},
  author={Mao, Zetian and Li, WenWen and Tan, Jethro},
  journal={npj Computational Materials},
  volume={10},
  number={1},
  pages={265},
  year={2024},
  publisher={Nature Publishing Group UK London}
}

@article{ran2023velas,
  title={VELAS: An open-source toolbox for visualization and analysis of elastic anisotropy},
  author={Ran, Zheng and Zou, Chunming and Wei, Zunjie and Wang, Hongwei},
  journal={Computer Physics Communications},
  volume={283},
  pages={108540},
  year={2023},
  publisher={Elsevier}
}

@article{gorfman2024piezoelectric,
  title={Piezoelectric coefficients and crystallographic symmetry},
  author={Gorfman, Sem{\"e}n and Zhang, Nan},
  journal={Piezoelectric Materials: From Fundamentals to Emerging Applications},
  volume={1},
  pages={1--15},
  year={2024},
  publisher={Wiley Online Library}
}

@article{de2015database,
  title={A database to enable discovery and design of piezoelectric materials},
  author={De Jong, Maarten and Chen, Wei and Geerlings, Henry and Asta, Mark and Persson, Kristin Aslaug},
  journal={Scientific data},
  volume={2},
  number={1},
  pages={1--13},
  year={2015},
  publisher={Nature Publishing Group}
}

@article{momma2011vesta,
  title={VESTA 3 for three-dimensional visualization of crystal, volumetric and morphology data},
  author={Momma, Koichi and Izumi, Fujio},
  journal={Journal of applied crystallography},
  volume={44},
  number={6},
  pages={1272--1276},
  year={2011},
  publisher={International Union of Crystallography}
}

\newpage

\appendix
\onecolumn
\setcounter{secnumdepth}{2}
\section{Appendix }
\subsection{Related work}
\label{work}

\textbf{GNN-Based Methods.}
CGCNN is a pioneering GNN model specifically designed for handling crystal structures \citep{xie2018crystal}. This model proposed to represent crystal structures as multi-edge crystal graphs. It was applied to predict various scalar properties such as formation energy and band gap. Since then, several GNN methods have been developed to improve upon CGCNN through exploring various network designs or leveraging prior knowledge \citep{chen2019graph,louis2020graph,choudhary2021atomistic,das2023crysgnn,lin2023efficient}.
These GNN methods are primarily designed for scalar property prediction and do not address the prediction of high-order tensor properties, such as dielectric or elastic tensors. Furthermore, they lack the ability to preserve the equivariance required for accurate high-order tensor property prediction.
In contrast, recent studies attempted to ensure equivariance through specialized network architectures~\citep{mao2024dielectric,lou2024discovery,heilman2024equivariant,wen2024equivariant,pakornchote2023straintensornet,yanspace, zhong2023general}. 
These approaches generally employ harmonic decomposition to achieve equivariance for tensor properties. 
Within these network architectures, many operations are required, such as tensor products and combining irreducible representations. These processes significantly increase computational costs, especially when handling higher-order data.

\textbf{Transformer-Based Methods.}
Transformers, with their self-attention mechanism and parallel processing capabilities, are particularly well-suited for predicting crystal material properties. Matformer \citep{yan2022periodic}, one of the earliest Transformer-based networks used for crystal material property prediction, encoded crystal periodic patterns by using the geometric distances between the same atoms in neighboring unit cells. This addressed the issue in earlier GNN-based methods, including CGCNN, MEGNet, GATGNN, and others, which neglected the periodic patterns of infinite crystal structures. Subsequently, more advanced Transformer-based approaches were proposed. Some methods, such as ComFormer \citep{yancomplete}, CrystalFormer \citep{wang2024conformal}, CrystalFormer \citep{taniaicrystalformer}, and CrystalFramer  \citep{ito2025rethinking} typically incorporate either enhanced graph construction techniques or physical priors, exhibiting impressive results in the prediction of scalar properties. 
Furthermore, DOSTransformer \citep{lee2024density} is tailored for the density of states prediction, utilizing prompt-guided multi-modal transformer architecture to achieve super performance.
Nevertheless, these models are not directly applicable for accurate high-order tensor property prediction due to their inability to capture the necessary equivariance. 

\textbf{Equivariant Learning.} 
Equivariance can be achieved through specialized architectural designs, such as incorporating high-degree steerable features \citep{han2025survey}. However, due to the structural constraints imposed by equivariant architectures, these models are often difficult to optimize \citep{manolache2025learning,xie2025tale}, and their complex designs may limit the universal approximation property \citep{pacini2025universality,cenuniversally}.
Recently, two general-purpose techniques for achieving equivariant learning under arbitrary groups have gained widespread attention: frame averaging \citep{linequivariance, punyframe, duval2023faenet} and canonicalization \citep{ma2024canonicalization, kaba2023equivariance, dymequivariant}. These methods represent promising alternatives to address the limitations of conventional equivariant models, as they impose no architectural constraints.
Frame averaging is developed based on group averaging and involves averaging over smaller frames instead of the entire group. 
Canonicalization, on the other hand, is inherently related to frame averaging and can be viewed as a more general form of it. \citep{ma2024canonicalization}. 
In particular, Lin et al. \citep{linequivariance} introduced minimal frame averaging, a method designed to achieve both efficient and exact equivalence in equivariant learning. 
This approach is primarily applicable to scenarios where the input and output spaces are identical, such as the n-body problem. It does not extend to cases involving different spaces, such as crystal tensor prediction.
If this method were extended to tensor property prediction, it would represent a special case of Proposition \ref{proposition_Unified}, where both orbit canonicalization and rigid registration are achieved using QR decomposition.
Within Proposition \ref{proposition_Unified}. However, different functions can be employed for orbit canonicalization and rigid registration. For instance, the Kabsch algorithm \citep{kabsch1976solution, lawrence2019purely} can be employed for rigid registration, while polar decomposition can be utilized for canonicalization.

To the best of our knowledge, no previous work has extended these equivariant learning techniques to the prediction of crystal tensor properties. Thus, our method represents the first extension of these concepts to the crystal tensor property prediction.

\color{black}

\subsection{Details of Framework Architecture}
\label{ecomformer}

{\bf Crystal Graph Construction.} 
In \textit{GoeCTP}, we use the crystal graph construction from \citet{yancomplete,yanspace} to describe the structure and relationships within crystals.
Specifically, we first convert the fractional coordinate system $(\mathbf{A}, \mathbf{F}, \mathbf{H})$ into the Cartesian coordinate system $(\mathbf{A}, \mathbf{X}, \mathbf{H})$ as introduced in Section 2.
Then, we assume that the output crystal graph is represented as $\mathcal{G}(\mathcal{V}, \mathcal{E})$, $\mathcal{V}$ denotes the set of nodes $v_i$ in the crystal graph, where each node $v_i$ contains atomic features $\mathbf{v}_i=(\boldsymbol{a}_{i}, \hat{\boldsymbol{p}}_i)$. $\mathcal{E}$ represents the set of edges denoted as $e_{ij}$, which are typically constructed based on the Euclidean distance $d_{ij}$ between nodes $v_i$ and $v_j$. 
When the Euclidean distance $d_{ij}$ between nodes $v_i$ and $v_j$ is less than a given radius $R$, i.e. $d_{ij}=||\boldsymbol{p}_{j} + k_{1}\boldsymbol{l}_{1} + k_{2}\boldsymbol{l}_{2} + k_{3}\boldsymbol{l}_{3} - \boldsymbol{p}_{i}||_{2} \leq R$,
an edge $e_{ij}$ will be built with edge feature $\mathbf{e}_{ij}=\boldsymbol{p}_{j} + k_{1}\boldsymbol{l}_{1} + k_{2}\boldsymbol{l}_{2} + k_{3}\boldsymbol{l}_{3} - \boldsymbol{p}_{i}$.
Here, $R$ is established based on the distance to the $k$-th nearest neighbor, and
different values of $\boldsymbol{k} = [k_{1}, k_{2}, k_{3}] \in \mathbb{Z}^{3}$ represent different edges between nodes $v_i$ and $v_j$. 
Upon construction, the edge features are denoted as $e_{ij}$,
while the node features, representing atomic properties, are denoted as $\boldsymbol{a}_{i}$.
Since there are no equivariance requirements for subsequent models, any other graph construction methods can be used to replace this part, such as \citet{wang2024conformal,yancomplete}.

{\bf Node and Edge Feature Embedding.} Building on previous work \citep{xie2018crystal,yanspace,yancomplete}, node features $\boldsymbol{a}_{i}$ are embedded into a 92-dimensional CGCNN feature vector $\boldsymbol{f}_{i}$. Edge features $\mathbf{e}_{ij}$ are decomposed into their magnitude $||\mathbf{e}_{ij}||_2$ and a normalized direction vector $\hat{\mathbf{e}}_{ij}$. The magnitude is further mapped to a term similar to potential energy, $-c/||\mathbf{e}_{ij}||_2$, through the application of a radial basis function (RBF) kernel for encoding \citep{lin2023efficient}. Subsequently, $\mathbf{e}_{ij}$ are embedded into feature vector $\boldsymbol{f}_{ij}^e$.

{\bf eComFormer.}
The eComFormer has demonstrated strong performance across a range of scalar property prediction tasks \citep{yancomplete}.
It is built on an SO(3)-equivariant crystal graph representation, where the interatomic distance vectors are employed to represent the edge features of the graph. The model converts this crystal graph into embeddings and utilizes a transformer architecture, incorporating both a node-wise transformer layer and a node-wise equivariant updating layer, to extract rich geometric information during the message-passing process.

\begin{figure}[h] 
  \centering   
  \includegraphics[width=\textwidth]{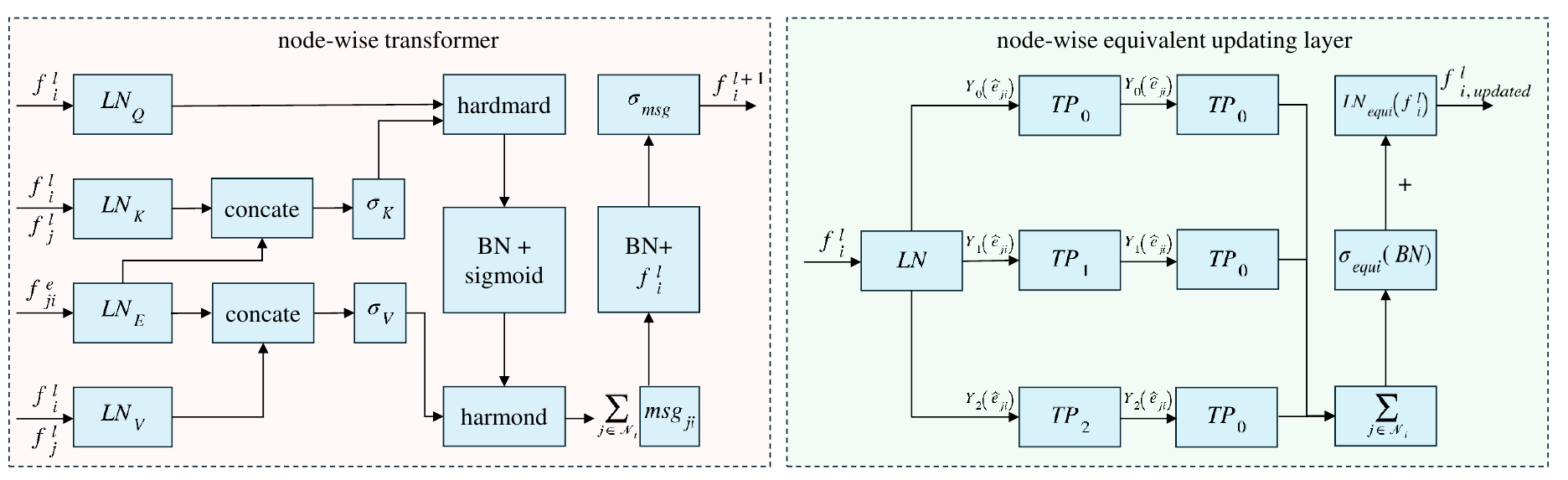}
  \caption{The detailed architectures of the node-wise transformer
layer and node-wise equivariant updating layer, adapted from \citet{yancomplete}.} 
  \label{fig:eComFormer}
\end{figure}

Specifically, the node-wise transformer layer is responsible for updating the node-invariant features $\boldsymbol{f}_{i}$. This process utilizes the node features $\boldsymbol{f}_{i}$, neighboring node features$\boldsymbol{f}_{j}$, and edge features $\boldsymbol{f}_{ij}$ to facilitate message passing from neighboring node $j$ to the central node $i$, followed by aggregation of all neighboring messages to update $\boldsymbol{f}_{i}$. The update mechanism is structured similarly to a transformer. Fristly, the message from node $j$ to node $i$ is transformed into corresponding query $\boldsymbol{q}_{ij}=\mathrm{LN}_Q(\boldsymbol{f}_i)$, key $\boldsymbol{k}_{ij}=(\mathrm{LN}_K(\boldsymbol{f}_i)|\mathrm{LN}_K(\boldsymbol{f}_j))$, and value feature $\boldsymbol{v}_{ij}=(\mathrm{LN}_V(\boldsymbol{f}_i)|\mathrm{LN}_V(\boldsymbol{f}_j)|\mathrm{LN}_E(\boldsymbol{f}_{ij}^e))$, where $\mathrm{LN}_Q(\cdot)$, $\mathrm{LN}_K(\cdot)$, $\mathrm{LN}_V(\cdot)$, $\mathrm{LN}_E(\cdot)$ denote the linear transformations, and $|$ denote the concatenation. Then, the self-attention
output is then computed as:
\begin{equation}
\boldsymbol{\alpha}_{ij}= \frac{\boldsymbol{q}_{ij}\circ\boldsymbol{\xi}_K(\boldsymbol{k}_{ij})}{\sqrt{d_{\boldsymbol{q}_{ij}}}},
\boldsymbol{msg}_{ij}=\operatorname{sigmoid}(\operatorname{BN}(\boldsymbol{\alpha}_{ij}))\circ\boldsymbol{\xi}_V(\boldsymbol{\upsilon}_{ij}), 
\end{equation}
where $\boldsymbol{\xi}_K$, $\boldsymbol{\xi}_V$ represent nonlinear transformations applied
to key and value features, respectively. The operators $\circ$ denote
the Hadamard product, $\operatorname{BN}(\cdot)$ refers to the batch normalization layer, and $\sqrt{d_{\boldsymbol{q}_{ij}}}$ indicates the dimensionality of $\boldsymbol{q}_{ij}$. Then, node feature $\boldsymbol{f}_{i}$ is updated as follows,
\begin{equation}
\boldsymbol{m}s\boldsymbol{g}_i=\sum_{j\in\mathcal{N}_i}\boldsymbol{m}s\boldsymbol{g}_{ij}, \boldsymbol{f}_i^\mathrm{new}=\boldsymbol{\xi}_{msg}(\boldsymbol{f}_i+\mathrm{BN}(\boldsymbol{m}s\boldsymbol{g}_i)),
\end{equation}
where $\boldsymbol{\xi}_{msg}(\cdot)$ denoting the softplus activation function.

The node-wise equivariant updating layer is designed to effectively capture geometric features by incorporating node feature $\mathbf{a}_i$ and edge feature $||\mathbf{e}_{ji}||_2$ as inputs and stacking two tensor product (TP) layers \citep{geiger2022e3nn}. It uses node feature $\boldsymbol{f}_{i}^l$ and equivalent vector feature $\mathbf{e}_{ji}$ embedded by corresponding spherical harmonics $\mathbf{Y}_0(\hat{\mathbf{e}}_{ji})=c_0$,$\mathbf{Y}_1(\hat{\mathbf{e}}_{ji})=c_1*\frac{\mathbf{e}_{ji}}{||\mathbf{e}_{ji}||_2}\in\mathbb{R}^3$ and $\mathbf{Y}_2(\hat{\mathbf{e}}_{ji})\in\mathbb{R}^5$ to represent the input features. Gathering rotational information from neighboring nodes to the central node $i$, the first TP layer is shown as 
\begin{equation}
\boldsymbol{f}_{i,0}^l=\boldsymbol{f}_i^{l^{\prime}}+\frac1{|\mathcal{N}_i|}\sum_{j\in\mathcal{N}_i}\mathbf{TP}_0(\boldsymbol{f}_j^{l^{\prime}},\mathbf{Y}_0(\hat{\mathbf{e}}_{ji})),\boldsymbol{f}_{i,\lambda}^l=\frac1{|\mathcal{N}_i|}\sum_{j\in\mathcal{N}_i}\mathbf{TP}_\lambda(\boldsymbol{f}_j^{l^{\prime}},\mathbf{Y}_\lambda(\hat{\mathbf{e}}_{ji})),\lambda\in\{1,2\},
\end{equation}
where $\boldsymbol{f}_i^{l^{\prime}}$ is linearly transformed from $\boldsymbol{f}_{i}^l$, $|\mathcal{N}_i|$ and $\mathbf{TP}_\lambda$ represent the number of neighbors of node $i$ and TP layers with rotation order $\lambda$ respectively. Then, to represent the invariant node features, the second TP layer is written as 
\begin{equation}    \boldsymbol{f}_i^{l*}=\frac1{|\mathcal{N}_i|}(\sum_{j\in\mathcal{N}_i}\mathbf{TP}_0(\boldsymbol{f}_{j,0}^l,\mathbf{Y}_0(\hat{\mathbf{e}}_{ji}))+\sum_{j\in\mathcal{N}_i}\mathbf{TP}_0(\boldsymbol{f}_{j,1}^l,\mathbf{Y}_1(\hat{\mathbf{e}}_{ji}))+\sum_{j\in\mathcal{N}_i}\mathbf{TP}_0(\boldsymbol{f}_{j,2}^l,\mathbf{Y}_2(\hat{\mathbf{e}}_{ji}))),
\end{equation}
stacking the two tensor product layers together using both linear and nonlinear transformations, the output $\boldsymbol{f}_{i,updated}^l$ is combined as
\begin{equation}
\boldsymbol{f}_{i,updated}^l=\boldsymbol{\sigma}_{\mathrm{equi}}(\mathrm{BN}(\boldsymbol{f}_i^{l*}))+\mathrm{LN}_{\mathrm{equi}}(\boldsymbol{f}_i^l),
\end{equation}
with $\boldsymbol{\sigma}_{\mathrm{equi}}$ denoting a nonlinear transformation made up of two softplus layers with a linear layer positioned between them.
The detailed architectures of the node-wise transformer layer and node-wise equivariant
updating layer are shown in Fig. \ref{fig:eComFormer}.

\subsection{Proofs}
\label{Proofs}

\subsubsection{Proofs of Lemma \ref{lemma_invariant}}

\begin{lemma}
\label{lemma_invariant}
Given a orbit canonicalization function $C_\mathbf{M}(\cdot)$,
for any prediction function ${f_\theta }(\cdot)$, it holds that ${f_\theta }(C_\mathbf{M}(\cdot))$ is $O(3)$-invariant.

\end{lemma}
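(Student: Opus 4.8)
\textbf{Proof proposal for Lemma~\ref{lemma_invariant}.}

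The plan is to unwind the definition of orbit canonicalization (Definition~\ref{definition_Canonicalization}) and show that $C_\mathbf{M}$ is constant on each orbit, from which $O(3)$-invariance of $f_\theta \circ C_\mathbf{M}$ follows immediately. First I would fix a crystal $\mathbf{M}$ and an arbitrary group element $g \in O(3)$, and set $\mathbf{M}_1 = \mathbf{M}$ and $\mathbf{M}_2 = g \cdot \mathbf{M}$. Since the orbit is closed under the group action, both $\mathbf{M}_1$ and $\mathbf{M}_2$ lie in $\text{Orbit}(\mathbf{M})$, and in fact $\text{Orbit}(\mathbf{M}_1) = \text{Orbit}(\mathbf{M}_2) = \text{Orbit}(\mathbf{M})$. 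By Definition~\ref{definition_Canonicalization}, there is a single canonical representative $\mathbf{M}_0 \in \text{Orbit}(\mathbf{M})$ such that $C_\mathbf{M}$ sends \emph{every} element of the orbit to $\mathbf{M}_0$; in particular $C_\mathbf{M}(\mathbf{M}_1) = \mathbf{M}_0 = C_\mathbf{M}(\mathbf{M}_2)$, i.e.
\begin{equation}
C_\mathbf{M}(g \cdot \mathbf{M}) = C_\mathbf{M}(\mathbf{M}).
\end{equation}
This is precisely the statement that $C_\mathbf{M}$ is $O(3)$-invariant.

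Next I would apply $f_\theta$ to both sides. Since $C_\mathbf{M}(g \cdot \mathbf{M})$ and $C_\mathbf{M}(\mathbf{M})$ are the same element of $\mathcal{V}$, feeding them through any fixed function $f_\theta$ yields the same output:
\begin{equation}
f_\theta\bigl(C_\mathbf{M}(g \cdot \mathbf{M})\bigr) = f_\theta\bigl(C_\mathbf{M}(\mathbf{M})\bigr).
\end{equation}
As $g \in O(3)$ and $\mathbf{M} \in \mathcal{V}$ were arbitrary, this establishes that $f_\theta \circ C_\mathbf{M}$ is $O(3)$-invariant, completing the proof.

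The argument is essentially a formal manipulation, so there is no genuine obstacle; the only subtlety worth spelling out is the implicit well-definedness assumption in Definition~\ref{definition_Canonicalization}, namely that the canonical form $\mathbf{M}_0$ depends only on the orbit and not on the particular input representative. I would make this explicit at the start of the proof (noting that the existential quantifier on $\mathbf{M}_0$ in the definition is understood to be outside the universal quantifier on $\mathbf{M}_1$, so that one fixed $\mathbf{M}_0$ works for all $\mathbf{M}_1$ in the orbit), since the entire lemma hinges on it. With that reading in place, the chain of equalities above goes through verbatim.
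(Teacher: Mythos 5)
Your proof is correct and follows essentially the same route as the paper's: both arguments observe that $\mathbf{M}$ and $g\cdot\mathbf{M}$ lie in the same orbit, hence are mapped by $C_\mathbf{M}$ to the same canonical form $\mathbf{M}_0$, and then apply $f_\theta$. Your explicit remark about the quantifier order in Definition~\ref{definition_Canonicalization} is a worthwhile clarification but does not change the argument.
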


\begin{proof}
According to Definition 2, for any $g \in O(3)$ and $\mathbf{M} \in \mathcal{V}$, we have $C_\mathbf{M}(\mathbf{M})=\mathbf{M_0}$ and $C_\mathbf{M}(g\cdot\mathbf{M})=\mathbf{M_0}$. Thus, 
\begin{equation}
\begin{aligned}
{f_\theta }(C_\mathbf{M}(g\cdot\mathbf{M}))&={f_\theta }(\mathbf{M_0})\\&={f_\theta }(C_\mathbf{M}(\mathbf{M})),     
\end{aligned}     
\end{equation}
which implies ${f_\theta }(C_\mathbf{M}(\cdot))$ is $O(3)$-invariant.
\end{proof}

\subsubsection{Proofs of Lemma \ref{lemma_equivariant}}

\begin{lemma}
\label{lemma_equivariant}

For a rigid registration function $R_\mathbf{M}(\mathbf{M}_1,\mathbf{M}_2)=g$,
it is $O(3)$-equivariant with respect to the input $\mathbf{M}_1$.

\end{lemma}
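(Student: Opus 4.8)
The plan is to unwind the definitions of rigid registration and of the group action, and to show that pre-composing the first argument with a group element $h$ amounts to left-multiplying the output by the corresponding transformation. Here "equivariant with respect to $\mathbf{M}_1$" means that for every $h \in O(3)$ and every $\mathbf{M}_1, \mathbf{M}_2 \in \text{Orbit}(\mathbf{M})$ we have $R_\mathbf{M}(h \cdot \mathbf{M}_1, \mathbf{M}_2) = h \cdot R_\mathbf{M}(\mathbf{M}_1, \mathbf{M}_2)$, where $O(3)$ acts on the codomain $O(3)$ by composition of transformations (equivalently, by left multiplication of the associated orthogonal matrices) — this is precisely the action under which the construction in Proposition \ref{proposition_Unified} becomes equivariant.

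First I would observe that orbits are closed under the $O(3)$ action, so $h \cdot \mathbf{M}_1 \in \text{Orbit}(\mathbf{M})$ and both sides are well defined. Writing $g = R_\mathbf{M}(\mathbf{M}_1, \mathbf{M}_2)$, Definition \ref{definition_Rigid} gives $\mathbf{M}_1 = g \cdot \mathbf{M}_2$. Applying $h$ and using that $\rho_\mathcal{V}$ is a genuine group action, $h \cdot \mathbf{M}_1 = h \cdot (g \cdot \mathbf{M}_2) = (hg)\cdot \mathbf{M}_2$, so $hg$ is a group element carrying $\mathbf{M}_2$ onto $h\cdot\mathbf{M}_1$. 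Invoking Definition \ref{definition_Rigid} for the pair $(h\cdot\mathbf{M}_1, \mathbf{M}_2)$ then yields $R_\mathbf{M}(h\cdot\mathbf{M}_1,\mathbf{M}_2) = hg = h\cdot R_\mathbf{M}(\mathbf{M}_1,\mathbf{M}_2)$, which is the claimed equivariance.

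The one delicate point I expect is well-definedness of $R_\mathbf{M}$ as a function: if $\mathbf{M}_2$ has a nontrivial stabilizer in $O(3)$ (site symmetry), several group elements may map $\mathbf{M}_2$ to $\mathbf{M}_1$, and one must check the selection rule underlying $R_\mathbf{M}$ commutes with left multiplication by $h$. For the instantiation used here — rigid registration via the orthogonal factor of the polar decomposition (Proposition \ref{proposition_plolar_Canonical}) — this is automatic: if $\mathbf{M}_1 = (\mathbf{A},\mathbf{F},\mathbf{Q}_1\mathbf{H})$ and $\mathbf{M}_2 = (\mathbf{A},\mathbf{F},\mathbf{Q}_2\mathbf{H})$ share the Hermitian positive semi-definite factor $\mathbf{H}$ (which they must, since their lattices differ by an orthogonal matrix and $\mathbf{H}$ is invariant under left multiplication by orthogonal matrices), then $R_\mathbf{M}(\mathbf{M}_1,\mathbf{M}_2) = \mathbf{Q}_1\mathbf{Q}_2^{\top}$, and acting by the orthogonal matrix $\mathbf{Q}_h$ of $h$ sends the lattice of $\mathbf{M}_1$ to $\mathbf{Q}_h\mathbf{Q}_1\mathbf{H}$, whose polar decomposition has orthogonal factor $\mathbf{Q}_h\mathbf{Q}_1$ by the uniqueness in Theorem \ref{proposition_polar}; hence $R_\mathbf{M}(h\cdot\mathbf{M}_1,\mathbf{M}_2) = (\mathbf{Q}_h\mathbf{Q}_1)\mathbf{Q}_2^{\top} = \mathbf{Q}_h(\mathbf{Q}_1\mathbf{Q}_2^{\top})$. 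Thus uniqueness of the polar decomposition is exactly what upgrades the definitional computation above into a statement about the concrete registration function $f_{p2}$.
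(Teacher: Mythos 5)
Your proposal is correct and follows essentially the same route as the paper: apply $h$ to the defining identity $\mathbf{M}_1 = R_\mathbf{M}(\mathbf{M}_1,\mathbf{M}_2)\cdot\mathbf{M}_2$, use the group-action property to exhibit $h\cdot R_\mathbf{M}(\mathbf{M}_1,\mathbf{M}_2)$ as an element carrying $\mathbf{M}_2$ to $h\cdot\mathbf{M}_1$, and conclude by uniqueness. The one difference is how the uniqueness worry is discharged: you fall back on the polar-decomposition instantiation $f_{p2}$, whereas the paper settles it in general by noting that the action on $\mathbf{M}_2=(\mathbf{A}_2,\mathbf{F}_2,\mathbf{L}_2)$ only rescales the invertible lattice matrix, so $\mathbf{Q}_1\mathbf{L}_2=\mathbf{Q}_2\mathbf{L}_2$ forces $\mathbf{Q}_1=\mathbf{Q}_2$ (the stabilizer you were concerned about is trivial), which makes your abstract argument already complete without invoking Theorem \ref{proposition_polar}.
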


\begin{proof}
According to Definition 3, $\mathbf{M}_1=R_\mathbf{M}(\mathbf{M}_1,\mathbf{M}_2)\cdot\mathbf{M}_2$. Thus, for any $g \in O(3)$, we have 
\begin{equation}
\label{eq15}
\begin{aligned}
R_\mathbf{M}(g\cdot\mathbf{M}_1,\mathbf{M}_2)\cdot\mathbf{M}_2 &=g\cdot\mathbf{M}_1\\
&=g\cdot(R_\mathbf{M}(\mathbf{M}_1,\mathbf{M}_2)\cdot\mathbf{M}_2)\\
&=(g\cdot R_\mathbf{M}(\mathbf{M}_1,\mathbf{M}_2))\cdot\mathbf{M}_2.
\end{aligned}     
\end{equation}
Given $\mathbf{M}_2=(\mathbf{A}_2,\mathbf{F}_2,\mathbf{L}_2)$,
and noting that $g\cdot\mathbf{M}_2=(\mathbf{A}_2,\mathbf{F}_2,\mathbf{Q}\mathbf{L}_2)$ (where $\mathbf{L}_2$ has full rank), we can write Eq. \ref{eq15} as
\begin{equation}
(R_\mathbf{M}(g\cdot\mathbf{M}_1,\mathbf{M}_2))\mathbf{L}_2=(g\cdot R_\mathbf{M}(\mathbf{M}_1,\mathbf{M}_2))\mathbf{L}_2.
\end{equation}
Therefore, $R_\mathbf{M}(g\cdot\mathbf{M}_1,\mathbf{M}_2)=g\cdot R_\mathbf{M}(\mathbf{M}_1,\mathbf{M}_2)$, which implies that $R_\mathbf{M}(\mathbf{M}_1,\mathbf{M}_2)$ is $O(3)$-equivariant.
\end{proof}

\subsubsection{Proof of Proposition 1}
\label{proof_proposition_Unified}

\text{$ $}\newline
\textbf{Proposition 1}. \textit{($O(3)$-Equivariant Tensor Prediction from the Perspective of Canonicalization.)
Given an arbitrary tensor prediction function $f(\mathbf{M}): \mathcal{V} \to\mathcal{W}$, 
we can define a new function $h(\mathbf{M})=R_\mathbf{M}(\mathbf{M},C_\mathbf{M}(\mathbf{M}))\cdot f(C_\mathbf{M}(\mathbf{M}))$,
such that $h(\mathbf{M})$ is $O(3)$ equivariant for tensor prediction. }
\begin{proof}
According to Lemma \ref{lemma_equivariant} and Lemma \ref{lemma_invariant}, for any $g \in O(3)$  we have
\begin{equation}
\begin{aligned}
h(g\cdot\mathbf{M})&=R_\mathbf{M}(g\cdot\mathbf{M},C_\mathbf{M}(g\cdot\mathbf{M}))\cdot f(C_\mathbf{M}(g\cdot\mathbf{M}))\\
&=R_\mathbf{M}(g\cdot\mathbf{M},C_\mathbf{M}(\mathbf{M}))\cdot f(C_\mathbf{M}(\mathbf{M}))\\
&=g\cdot R_\mathbf{M}(\mathbf{M},C_\mathbf{M}(\mathbf{M}))\cdot f(C_\mathbf{M}(\mathbf{M}))\\
&=g\cdot h(\mathbf{M})
\end{aligned}     
\end{equation}
\end{proof}
Thus, $h(\mathbf{M})$ is $O(3)$ equivariant for tensor prediction.

\subsubsection{Proof of Proposition 2}
\label{proof_proposition_plolar_Canonical}
\text{$ $}\newline
\textbf{Proposition 2.}
\textit{(Orbit Canonicalization and Rigid Registration for Crystal Tensor Prediction.) 
By performing polar decomposition on the lattice matrix $\mathbf{L}=\mathbf{Q}\mathbf{H}$ of a crystal $\mathbf{M}$,
we can define functions $f_{p1}(\mathbf{M})=(\mathbf{A}, \mathbf{F}, \mathbf{H})$ and $f_{p2}(\mathbf{M},f_{p1}(\mathbf{M}))=\mathbf{Q}$, where $f_{p1}(\cdot)$ and $f_{p2}(\cdot)$ correspond to performing polar decomposition on the lattice matrix. In this perspective,
$f_{p1}(\cdot)$ serve as the orbit canonicalization function, while $f_{p2}(\cdot)$ serve as the rigid registration function. The matrix $\mathbf{H}$ is thus identified as the canonical form.}

\begin{proof}
Let $\mathbf{L}\in\mathbb{R}^{3\times 3}$ be an invertible lattice matrix.
By the polar decomposition, there exist a unique orthogonal $\mathbf{Q}\in O(3)$ and a unique symmetric positive-definite $\mathbf{H}\in\mathbb{R}^{3\times 3}$ such that
\begin{equation}\label{eq:polar}
\mathbf{L}=\mathbf{Q}\mathbf{H}, 
\qquad 
\big(\mathbf{L}^\top \mathbf{L}\big)^{1/2}=\big(\mathbf{H}^\top \mathbf{H}\big)^{1/2}=\big(\mathbf{H}\mathbf{H}\big)^{1/2}=\mathbf{H}.
\end{equation}

Consider an arbitrary $O(3)$ group transformation $\mathbf{Q'}\in O(3)$ acting on the lattice, i.e., $\mathbf{L}\mapsto \mathbf{Q'}\mathbf{L}$. 
Apply polar decomposition to $\mathbf{Q'}\mathbf{L}$:
\begin{equation}\label{eq:Q1H1}
\mathbf{Q'}\mathbf{L}=\mathbf{Q}_1\mathbf{H}_1,
\qquad 
\mathbf{H}_1=\big((\mathbf{Q'}\mathbf{L})^\top (\mathbf{Q'}\mathbf{L})\big)^{1/2}.
\end{equation}
Since $(\mathbf{Q'})^\top\mathbf{Q'}=\mathbf{I}$, we have
\begin{equation}\label{eq:H-invariance}
\mathbf{H}_1=\big(\mathbf{L}^\top \mathbf{L}\big)^{1/2}=\mathbf{H}.
\end{equation}

Therefore, $\mathbf{H}$ remains unaffected by the $O(3)$ group transformation. Because $\mathbf{A}$ and $\mathbf{F}$ are unchanged by $O(3)$ group transformation and $\mathbf{H}$ is invariant by \eqref{eq:H-invariance}, the triplet
\begin{equation}
\mathbf{M}_0 \coloneqq f_{p1}(\mathbf{M})=(\mathbf{A},\mathbf{F},\mathbf{H})
\end{equation}
is identical for all representatives of the $O(3)$-orbit of $\mathbf{M}$. Therefore, $f_{p1}(\cdot)$ serves as an orbit canonicalization map and the matrix $\mathbf{H}$ is the canonical form.

Furthermore, based on $\mathbf{L} = \mathbf{Q}\mathbf{H} $, 
performing polar decomposition on the lattice matrix to obtain $\mathbf{Q}$, i.e. $f_{p2}(\mathbf{M},\mathbf{M_0})=\mathbf{Q}$, can be regarded as rigid registration function, as shown below:
\begin{equation}
\mathbf{L} = \mathbf{Q}\mathbf{H} \quad\to\quad \mathbf{M} = \mathbf{Q}\mathbf{M_0}\quad\to\quad \mathbf{M} = f_{p2}(\mathbf{M},\mathbf{M_0})\cdot \mathbf{M_0}.
\end{equation}

\end{proof}

\subsection{$O(3)$ equivariance for crystal tensor properties}
\label{tensor_equivariance}

Due to the difference between the input space $\mathcal{V}$ and the output space $\mathcal{W}$, 
the $O(3)$ equivariance required for crystal tensor property prediction tasks differs from the $O(3)$ equivariance typically encountered in general molecular studies \citep{hoogeboom2022equivariant,xuequivariant,song2024equivariant}.
In molecular studies, 
the input space $\mathcal{V}$ and the output space $\mathcal{W}$ are the same, meaning group representations in these spaces are identical, i.e. $\rho_\mathcal{V}(g)=\rho_\mathcal{W}(g)$. Specifically,
for a function $f(\mathbf{X})$, if it is $O(3)$ equivariant, then it satisfies $f(\mathbf{Q}\mathbf{X})=\mathbf{Q}f(\mathbf{X})$, where $\mathbf{Q}\in\mathbb{R}^{3 \times 3}$ is an arbitrary orthogonal matrix, and $\mathbf{X}\in\mathbb{R}^{3 \times N}$ represents the coordinate matrix of atoms in a molecule. In this context, the function $f$ can be seen as the model. 
However, in the prediction of crystal tensor properties, the
requirements for $O(3)$ equivariance differ. 
In what follows, we will provide a detailed introduction to this difference.

\textbf{Dielectric tensor.} 
For the dielectric tensor $\boldsymbol{\varepsilon}\in\mathbb{R}^{3\times3}$ and a function $f:(\mathbf{A},\mathbf{F},\mathbf{L})\to\boldsymbol{\varepsilon}$, if it is $O(3)$ equivariant, it must satisfy $f(\mathbf{A},\mathbf{F},\mathbf{Q}\mathbf{L}) = \mathbf{Q}f(\mathbf{A},\mathbf{F},\mathbf{L})\mathbf{Q}^\top$. The reason for this difference  lies in the physical nature of the dielectric tensor, which characterizes a material's polarization response to an external electric field \citep{yanspace}. Specifically, the dielectric tensor describes the relationship between  the electric displacement $\mathbf{D}\in\mathbb{R}^3$ and the applied electric field $\mathbf{E}\in\mathbb{R}^3$ through the equation $\mathbf{D}=\boldsymbol{\varepsilon}\mathbf{E}$. 
When an $O(3)$ group transformation $\mathbf{Q}$ is applied to the crystal structure, the relationship $\mathbf{D'}=\boldsymbol{\varepsilon}'\mathbf{E'}$ holds, where $\mathbf{D'}=\mathbf{Q}\mathbf{D}$ and $\mathbf{E'}=\mathbf{Q}\mathbf{E}$. Substituting these into the equation, we have:
\begin{equation}
\mathbf{Q}\mathbf{D}=\boldsymbol{\varepsilon}'\mathbf{Q}\mathbf{E}  \to \mathbf{D}=\mathbf{Q}^\top\boldsymbol{\varepsilon}'\mathbf{Q}\mathbf{E},
\end{equation}
which implies that the dielectric tensor transforms under the $O(3)$ group transformation as:
$\boldsymbol{\varepsilon}'=\mathbf{Q}\boldsymbol{\varepsilon}\mathbf{Q}^\top$.
This transformation principle extends similarly to other crystal tensors, as follows.

\textbf{Piezoelectric tensor.} The piezoelectric tensor $\mathbf{e} \in \mathbb{R}^{3\times3\times3}$ describes the relationship between the applied strain $\boldsymbol{\epsilon} \in \mathbb{R}^{3 \times 3}$ to the electric displacement field $\mathbf{D} \in \mathbb{R}^3$ within the material. Mathematically, this relationship is expressed as $\mathbf{D}_i=\sum_{jk}\mathbf{e}_{ijk}\boldsymbol{\epsilon}_{jk}$, with $i, j, k \in \{ 1, 2, 3 \}$. 

When an $O(3)$ group transformation $\mathbf{Q}$ is applied to the crystal, the strain tensor and
electric displacement field is transformed to
$
\boldsymbol{\epsilon}_{jk}^{\prime}=\sum_{mn}\mathbf{Q}_{jm}\mathbf{Q}_{kn}\boldsymbol{\epsilon}_{mn}$ 
and $\mathbf{D}_{i}^{\prime}=\sum_{\ell}\mathbf{Q}_{i\ell}\mathbf{D}_{\ell}$. The relation is then reformulated as $\mathbf{D}_i^\prime=\sum_{jk}\mathbf{e}_{ijk}^\prime\boldsymbol{\epsilon}_{jk}^\prime $.
Since $\mathbf{Q}$ is orthogonal matrix
($\mathbf{Q}^{-1}=\mathbf{Q}^{\top}$), we have
$\boldsymbol{\epsilon}_{jk}=\sum_{mn}\mathbf{Q}_{mj}\mathbf{Q}_{nk}\boldsymbol{\epsilon}^{\prime}_{mn}$. Consequently,  $\mathbf{D}_{i}^{\prime}$ can be represented as
\begin{equation}
\begin{aligned}
\mathbf{D}_{i}^{\prime}&=\sum_{\ell}\mathbf{Q}_{i\ell}\mathbf{D}_{\ell}\\
&=\sum_{\ell}\mathbf{Q}_{i\ell}\sum_{jk}\mathbf{e}_{\ell jk}\boldsymbol{\epsilon}_{jk}\\
&=\sum_{\ell}\mathbf{Q}_{i\ell}\sum_{jk}\mathbf{e}_{\ell jk}(\sum_{mn}\mathbf{Q}_{mj}\mathbf{Q}_{nk}\boldsymbol{\epsilon}^{\prime}_{mn})\\
&=\sum_{\ell}\mathbf{Q}_{i\ell}\sum_{mn}\mathbf{e}_{\ell mn}(\sum_{jk}\mathbf{Q}_{jm}\mathbf{Q}_{kn}\boldsymbol{\epsilon}^{\prime}_{jk}) \quad (exchange \, sign, m \leftrightarrow j, n \leftrightarrow k)\\
&=\sum_{jk}
\sum_{lmn}\mathbf{Q}_{il}\mathbf{Q}_{jm}\mathbf{Q}_{kn}\mathbf{e}_{lmn}
\boldsymbol{\epsilon}_{jk}^\prime 
\end{aligned}
\end{equation}

Therefore, under the $O(3)$ group transformation $\mathbf{Q}$, the transformed piezoelectric tensor $\mathbf{e}_{ijk}^{\prime}$ is given by:
\begin{equation}
\mathbf{e}_{ijk}^{\prime}=\sum_{lmn}\mathbf{Q}_{il}\mathbf{Q}_{jm}\mathbf{Q}_{kn}\mathbf{e}_{lmn}.
\end{equation}

When employing GoeCTP to predict the piezoelectric tensor, Equation \ref{equation11} becomes $\mathbf{e}_{ijk}^{\mathrm{final}}=\sum_{lmn}\mathbf{Q}_{il}\mathbf{Q}_{jm}\mathbf{Q}_{kn}\mathbf{e}_{lmn}$.

\textbf{Elastic tensor.} The elastic tensor $C \in \mathbb{R}^{3\times3\times3\times3}$ describes the relationship between the applied strain $\boldsymbol{\epsilon} \in \mathbb{R}^{3 \times 3}$ and the stress tensor $\sigma \in \mathbb{R}^{3 \times 3}$ within the material. This relationship is expressed as $\boldsymbol{\sigma}_{ij} =\sum_{k \ell}C_{ijk \ell}\boldsymbol{\epsilon}_{k \ell}$, with $i,j,k,\ell \in \{1, 2, 3,4\}$.

When an $O(3)$ group transformation $\mathbf{Q}$ is applied to the crystal, the strain tensor and
stress tensor is transformed to
$
\boldsymbol{\epsilon}_{jk}^{\prime}=\sum_{mn}\mathbf{Q}_{jm}\mathbf{Q}_{kn}\boldsymbol{\epsilon}_{mn}$ 
and $
\boldsymbol{\sigma}_{jk}^{\prime}=\sum_{mn}\mathbf{Q}_{jm}\mathbf{Q}_{kn}\boldsymbol{\sigma}_{mn}$.
Under this transformation, the relationship becomes $\boldsymbol{\sigma}_{ij}^{\prime} =\sum_{k \ell}C_{ijk \ell}^{\prime}\boldsymbol{\epsilon}_{k \ell}^{\prime}$.
Since $\mathbf{Q}$ is orthogonal matrix ($\mathbf{Q}^{-1}=\mathbf{Q}^{\top}$), we have
$\boldsymbol{\epsilon}_{jk}=\sum_{mn}\mathbf{Q}_{mj}\mathbf{Q}_{nk}\boldsymbol{\epsilon}^{\prime}_{mn}$. 
Based on above equations, the transformed stress tensor $\boldsymbol{\sigma}_{ij}^\prime$ can be expressed as:

\begin{equation}
\begin{aligned}
\boldsymbol{\sigma}_{ij}^{\prime} &=\sum_{mn}\mathbf{Q}_{im}\mathbf{Q}_{jn}\boldsymbol{\sigma}_{mn}\\
&=\sum_{mn}\mathbf{Q}_{im}\mathbf{Q}_{jn}\sum_{pq}C_{mnpq}\boldsymbol{\epsilon}_{pq}\\
&=\sum_{mn}\mathbf{Q}_{im}\mathbf{Q}_{jn}\sum_{pq}C_{mnpq}\sum_{k\ell}\mathbf{Q}_{kp}\mathbf{Q}_{\ell q}\boldsymbol{\epsilon}^{\prime}_{k\ell}\\
&=\sum_{k\ell}\sum_{mnpq    }\mathbf{Q}_{im}\mathbf{Q}_{jn}\mathbf{Q}_{kp}\mathbf{Q}_{lq}C_{mnpq}\boldsymbol{\epsilon}^{\prime}_{k\ell}
\end{aligned}
\end{equation}

Therefore, under the $O(3)$ group transformation $\mathbf{Q}$, $C_{ijkl}^{\prime}$ is represented as: 
\begin{equation}
C_{ijkl}^{\prime}=\sum_{mnpq}\mathbf{Q}_{im}\mathbf{Q}_{jn}\mathbf{Q}_{kp}\mathbf{Q}_{lq}C_{mnpq}.
\end{equation}

When GoeCTP is used to predict elastic tensor, Equation \ref{equation11}   becomes
$C_{ijkl}^{\mathrm{final}}=\sum_{mnpq    }\mathbf{Q}_{im}\mathbf{Q}_{jn}\mathbf{Q}_{kp}\mathbf{Q}_{lq}C_{mnpq}$.

\subsection{Tensor properties symmetry}
\label{Voigt}

Tensor properties describe the material's response to external physical fields (such as electric fields or mechanical stress). In materials with symmetry, this response must adhere to the symmetry requirements of the material. As demonstrated by \citet{yanspace}, when the space group transformation $\mathbf{R}$ is applied to the corresponding crystal structure $\mathbf{M} = (\mathbf{A}, \mathbf{F}, \mathbf{L})$, the crystal remains unchanged, i.e., $(\mathbf{A}, \mathbf{F}, \mathbf{L})=(\mathbf{A}, \mathbf{F}, \mathbf{R}\mathbf{L})$. Therefore, the corresponding tensor properties also remain unchanged, i.e., $\boldsymbol{\varepsilon}=\mathbf{R}\boldsymbol{\varepsilon}\mathbf{R}^\top$. Thus, crystal symmetry imposes strict constraints on the components of the tensor, leading to the simplification or elimination of many components, reducing the number of independent components. 

\begin{table}[h]
    \centering
    \scalebox{0.95}{
    \begin{tabular}{c c c }
        \toprule

        Crystal system&  Number of independent elements&  Dielectric tensor \\       
        \midrule
        Cubic& 1&$\boldsymbol{\varepsilon}=\begin{pmatrix}\boldsymbol{\varepsilon}_{11}&0&0\\0&\boldsymbol{\varepsilon}_{11}&0\\0&0&\boldsymbol{\varepsilon}_{11}\end{pmatrix}$ \\
        Tetragonal \& Hexagonal \&  Trigonal & 2&$\boldsymbol{\varepsilon}=\begin{pmatrix}\boldsymbol{\varepsilon}_{11}&0&0\\0&\boldsymbol{\varepsilon}_{11}&0\\0&0&\boldsymbol{\varepsilon}_{33}\end{pmatrix}$  \\

        Orthorhombic& 3& $\boldsymbol{\varepsilon}=\begin{pmatrix}\boldsymbol{\varepsilon}_{11}&0&0\\0&\boldsymbol{\varepsilon}_{22}&0\\0&0&\boldsymbol{\varepsilon}_{33}\end{pmatrix}$\\
        
        Monoclinic& 4& $\boldsymbol{\varepsilon}=\begin{pmatrix}\boldsymbol{\varepsilon}_{11}&0&\boldsymbol{\varepsilon}_{13}\\0&\boldsymbol{\varepsilon}_{22}&0\\\boldsymbol{\varepsilon}_{13}&0&\boldsymbol{\varepsilon}_{33}\end{pmatrix}$\\
        
        Triclinic& 6& $\boldsymbol{\varepsilon}=\begin{pmatrix}\boldsymbol{\varepsilon}_{11}&\boldsymbol{\varepsilon}_{12}&\boldsymbol{\varepsilon}_{13}\\\boldsymbol{\varepsilon}_{12}&\boldsymbol{\varepsilon}_{22}&\boldsymbol{\varepsilon}_{23}\\\boldsymbol{\varepsilon}_{13}&\boldsymbol{\varepsilon}_{23}&\boldsymbol{\varepsilon}_{33}\end{pmatrix}$\\        
        \bottomrule
    \end{tabular}
    }
    \caption{Number of independent components in the dielectric tensor for different crystal systems.}
        \label{dielectric_crystal_systems}
\end{table}

\textbf{Independent components in the dielectric tensor.}
The $3\times3$ dielectric tensor has a minimum of 1 and a maximum of 6 independent elements for various types of systems due to the crystal symmetry.
The number of independent components in the dielectric tensor for different crystal systems is shown in Tabel \ref{dielectric_crystal_systems} \citep{mao2024dielectric}.

\textbf{Voigt notation for elastic tensor.} Similar to the dielectric tensor,
the elastic tensor also
has independent elements for various types of systems due to the crystal symmetry.
The elastic tensor has a minimum of 3 and a maximum of 21 independent elements for various types of systems.
Voigt notation is a compact way to represent these independent components of tensor properties \citep{itin2013constitutive}.
According to the rules 
$11\to1\mathrm{~;~}22\to2\mathrm{~;~}33\to3\mathrm{~;~}23,~32\to4\mathrm{~;~}31,~13\to5\mathrm{~;~}12,~21\to6$,
the elastic tensor in Voigt notation is a $6\times6$ symmetric matrix \citep{wen2024equivariant,ran2023velas}:
\begin{equation}
\scalebox{0.9}{$
C=\begin{pmatrix}
C_{1111}&C_{1122}&C_{1133}&C_{1123}&C_{1131}&C_{1112}
\\C_{1122}&C_{2222}&C_{2233}&C_{2223}&C_{2231}&C_{2212}
\\C_{1133}&C_{2233}&C_{3333}&C_{3323}&C_{3331}&C_{3312}
\\C_{1123}&C_{2223}&C_{3323}&C_{2323}&C_{2331}&C_{2312}
\\C_{1131}&C_{2231}&C_{3331}&C_{2331}&C_{3131}&C_{3112}
\\C_{1112}&C_{2212}&C_{3312}&C_{2312}&C_{3112}&C_{1212}\end{pmatrix}\to\begin{pmatrix}C_{11}&C_{12}&C_{13}&C_{14}&C_{15}&C_{16}\\C_{12}&C_{22}&C_{23}&C_{24}&C_{25}&C_{26}\\C_{13}&C_{23}&C_{33}&C_{34}&C_{35}&C_{36}\\C_{14}&C_{24}&C_{34}&C_{44}&C_{45}&C_{46}\\C_{15}&C_{25}&C_{35}&C_{45}&C_{55}&C_{56}\\C_{16}&C_{26}&C_{36}&C_{46}&C_{56}&C_{66}\end{pmatrix}$
}
\end{equation}
The number of independent components in the elastic tensor for different crystal systems is shown in Tabel \ref{elastic_crystal_systems} (partial data shown; for more details, refer to \citet{wen2024equivariant,ran2023velas}).

\begin{table}[h]
    \centering
    \scalebox{0.8}{
    \begin{tabular}{c c c }
        \toprule

        Crystal system&  Number of independent elements&  Elastic tensor \\       
        \midrule
        Cubic& 3&$\begin{gathered}C=\begin{pmatrix}C_{11}&C_{12}&C_{12}&0&0&0\\C_{12}&C_{11}&C_{12}&0&0&0\\C_{12}&C_{12}&C_{11}&0&0&0\\0&0&0&C_{44}&0&0\\0&0&0&0&C_{44}&0\\0&0&0&0&0&C_{44}\end{pmatrix}\end{gathered}$ \\
        


        Tetragonal  & 6& $\begin{gathered}C=\begin{pmatrix}C_{11}&C_{12}&C_{13}&0&0&0\\C_{12}&C_{11}&C_{13}&0&0&0\\C_{13}&C_{13}&C_{33}&0&0&0\\0&0&0&C_{44}&0&0\\0&0&0&0&C_{44}&0\\0&0&0&0&0&C_{66}\end{pmatrix}\end{gathered}$\\

        
        
        Triclinic& 21& $\begin{gathered}C=\begin{pmatrix}C_{11}&C_{12}&C_{13}&C_{14}&C_{15}&C_{16}\\C_{12}&C_{22}&C_{23}&C_{24}&C_{25}&C_{26}\\C_{13}&C_{23}&C_{33}&C_{34}&C_{35}&C_{36}\\C_{14}&C_{24}&C_{34}&C_{44}&C_{45}&C_{46}\\C_{15}&C_{25}&C_{35}&C_{45}&C_{55}&C_{56}\\C_{16}&C_{26}&C_{36}&C_{46}&C_{56}&C_{66}\end{pmatrix}\end{gathered}$\\        
        \bottomrule
    \end{tabular}
    }
    \caption{Number of independent components in the elastic tensor for different crystal systems.}
        \label{elastic_crystal_systems}
\end{table}

\textbf{Voigt notation for piezoelectric tensor.} 
The number of independent components in the piezoelectric tensor for different crystal systems is shown in Tabel \ref{piezoelectric_crystal_systems} (partial data shown; for more details, refer to \citet{de2015database,gorfman2024piezoelectric}).

\begin{table}[h]
    \centering
    \scalebox{0.8}{
    \begin{tabular}{c c c c }
        \toprule

        Crystal system& point groups& Number of independent elements&  Piezoelectric tensor \\       
        \midrule
        

        Trigonal &32 &2 & $\mathbf{e}=\begin{pmatrix}\mathbf{e}_{11}&-\mathbf{e}_{11}&0&\mathbf{e}_{14}&0&0\\0&0&0&0&-\mathbf{e}_{14}&-\mathbf{e}_{11}\\0&0&0&0&0&0\end{pmatrix}$\\


        
        Monoclinic& 2&8& $\mathbf{e}=\begin{pmatrix}0&0&0&\mathbf{e}_{14}&0&\mathbf{e}_{16}\\\mathbf{e}_{21}&\mathbf{e}_{22}&\mathbf{e}_{23}&0&e_{25}&0\\0&0&0&\mathbf{e}_{34}&0&\mathbf{e}_{36}\end{pmatrix}$\\
        
        Triclinic& 1& 18& $\mathbf{e}=\begin{pmatrix}
\mathbf{e}_{11} & \mathbf{e}_{12} & \mathbf{e}_{13} & \mathbf{e}_{14} & \mathbf{e}_{15} & \mathbf{e}_{16} \\
\mathbf{e}_{21} & \mathbf{e}_{22} & \mathbf{e}_{23} & \mathbf{e}_{24} & \mathbf{e}_{25} & \mathbf{e}_{26} \\
\mathbf{e}_{31} & \mathbf{e}_{32} & \mathbf{e}_{33} & \mathbf{e}_{34} & \mathbf{e}_{35} & \mathbf{e}_{36}
\end{pmatrix}$\\        
        \bottomrule
    \end{tabular}
    }
    
    \caption{Number of independent components in the piezoelectric tensor for different crystal systems.}
\label{piezoelectric_crystal_systems}
    
\end{table}

\subsection{Experimental details}
\label{Experimental_details}


\textbf{Datasets details.} 
The dataset for dielectric tensor and piezoelectric tensor is derived from the data processed by \citet{yanspace}, sourced from the JARVIS-DFT database. 
Since nearly half of the crystal samples in the piezoelectric dataset have piezoelectric tensor labels that are entirely zero, training {\em GoeCTP} directly on this dataset can lead to severe overfitting, thereby compromising the reliability of the experimental results.
Therefore, we filtered out all crystal samples with zero-valued piezoelectric tensor labels from piezoelectric dataset.
For the elastic tensor prediction task, the dataset is also obtained from the JARVIS-DFT database and is publicly accessible as \texttt{dft\_3d} through the \texttt{jarvis-tools} package \footnote{\url{https://pages.nist.gov/jarvis/databases/}}. The statistical details of the datasets are presented in Table \ref{Dataset_statistics2}.
Additionally, 
in the dielectric tensor dataset, the dielectric tensor is a $3\times3$ symmetric matrix. Therefore, during prediction, we predict 6 elements of the matrix (see Appendix \ref{Voigt} for details) and then reconstruct the entire $3\times3$ symmetric matrix.
In the piezoelectric tensor dataset, the piezoelectric tensor is represented using Voigt notation as a $3\times6$ matrix, rather than a $3\times3\times3$ third-order tensor
(see Appendix \ref{Voigt} for details).
we only predict the $3\times6$ matrix.
In the elastic tensor dataset, the elastic tensor is represented using Voigt notation as a $6\times6$ symmetric matrix, rather than a $3\times3\times3\times3$ fourth-order tensor \citep{itin2013constitutive,wen2024equivariant}. Therefore, we predict the $6\times6$ matrix during usage (see Appendix \ref{Voigt} for details).

\begin{table}[h]
    \centering
\resizebox{0.48\textwidth}{!}{ 
    \begin{tabular}{c|c |c |c|c }
        \toprule

        Dataset &Sample size &Fnorm Mean  & Fnorm STD  &Unit \\       
        \midrule
        Dielectric& 4713 &14.7 & 18.2 & Unitless \\

        Piezoelectric&  2701 & 0.79 & 4.03 & $\text{C}/\text{m}^2$ \\
        Elastic& 25110  & 306.4 & 238.4 & GPa \\

        \bottomrule
    \end{tabular}
    }
    \caption{Dataset statistics.}
       \label{Dataset_statistics2}
\end{table}

\textbf{Experimental Settings.}
The experiments were performed using a single NVIDIA GeForce RTX 3090 GPU. For benchmarking, we directly utilized the codebases for ETGNN and GMTNet as provided by \citet{yanspace}. For each property, the dataset is split into training, validation, and test sets in an 8:1:1 ratio \citep{yanspace}.

\textbf{Hyperparameter settings of {\em GoeCTP (eCom.)}.}
When constructing the crystal graph, we used the 16th nearest atom to determine the cutoff radius. For edge embeddings, we used an RBF kernel with $c=0.75$
and values ranging from $-$4 to 0, 
which maps $-c/||\mathbf{e}_{ij}||_2$ to a 512-dimensional vector. In the dielectric and piezoelectric tensor prediction task, the 512-dimensional vector is mapped to a 128-dimensional vector through a non-linear layer, while in the elastic tensor prediction, it is mapped to a 256-dimensional vector.
For the prediction module, eComFormer \citep{yancomplete}, in the dielectric tensor prediction task, four node-wise transformer layers and one node-wise equivariant updating layer were utilized. For the piezoelectric and elastic tensor prediction tasks, two node-wise transformer layers and one node-wise equivariant updating layer were used. The learning rate was set to 0.001 for both dielectric and elastic tensor tasks, trained for 200 epochs with a batch size of 64. For the piezoelectric tensor task, the learning rate was set to 0.002 under the same training configuration.
In the Reverse R\&R module, for the dielectric tensor prediction task, the 128-dimensional feature vector output from the prediction module was mapped to a 6-dimensional vector through two nonlinear layers, then reshaped into a $3\times3$ matrix and combined with the orthogonal matrix $\mathbf{Q}$ obtained from the R\&R module to ensure $O(3)$-equivariant output.
For the piezoelectric tensor prediction task, the 18-dimensional feature vector output from the prediction module was reconstructed into a $3\times6$ matrix and combined with the orthogonal matrix $\mathbf{Q}$ from the R\&R module to achieve $O(3)$-equivariant output.
For the elastic tensor prediction task, the 256-dimensional feature vector output from the prediction module was mapped to a 36-dimensional vector through two nonlinear layers, then reconstructed into a $6\times6$ matrix and combined with the orthogonal matrix $\mathbf{Q}$ from the R\&R module to achieve $O(3)$-equivariant output.
During model training, the AdamW optimizer \citep{loshchilovdecoupled} was employed with a weight decay of $10^{-5}$ and a polynomial learning rate decay schedule.
For the dielectric and elastic tensor prediction tasks, the Huber loss function was used, while for the piezoelectric tensor prediction task, the L1 loss function was adopted.

\textbf{Hyperparameter settings of {\em GoeCTP (C. Fra.)}.}
For all tensor prediction tasks, the configuration of the prediction module, CrystalFramer \citep{ito2025rethinking}, including the input atom-embedding layer, self-attention blocks, and other architectural components, was kept identical to the setup described in the original work \citep{ito2025rethinking} on the JARVIS dataset.
For the dielectric tensor prediction task, the model was trained using an L1 loss function with a learning rate of 0.002, a batch size of 256, and 200 training epochs, following a polynomial learning rate decay schedule.
For the piezoelectric tensor prediction task, the model employed an L1 loss function with a learning rate of 0.002, a batch size of 512, and 1000 training epochs, using a warm-up-free inverse square root schedule \citep{huang2020improving,ito2025rethinking}.
For the elastic tensor prediction task, the model adopted the Huber loss function with a learning rate of 0.001, a batch size of 64, and 500 training epochs, along with a polynomial learning rate decay schedule.
Across all experiments, the AdamW optimizer \citep{loshchilovdecoupled} was used with parameters $(\beta_1, \beta_2) = (0.9, 0.98)$ and a weight decay of $10^{-5}$.

\textbf{Hyperparameter settings of GMTNet and ETGNN.}
Following \citet{yanspace},
we trained GMTNet and ETGNN for 200 epochs using Huber loss with a learning rate of 0.001 and Adam optimizer with $10^{-5}$ weight decay across all tasks. The same polynomial learning rate decay scheduler is used in all experiments. 
For the dielectric tensor prediction task and the piezoelectric tensor prediction task, the network configuration of GMTNet followed that described in the original paper, comprising two layers for node-invariant feature updating and three layers for equivariant message passing. For the elastic tensor prediction task, the configuration was adjusted to include two layers for node-invariant feature updating and five layers for equivariant message passing.

\subsection{Additional results.}
\label{5_times}

We assessed the performance of GoeCTP when combined with iComFormer \citep{yancomplete} and CrystalFormer \citep{taniaicrystalformer} on both the dielectric and elastic tensor datasets. 
We additionally included the baseline method MEGNet \citep{chen2019graph,morita2020modeling} in our comparative experiments.
Additional comparison results for the dielectric tensor dataset are presented in Table \ref{ec_ic_outcome},
and those for the elastic tensor dataset are presented in Table \ref{ec_ic_outcome2}.

\begin{table}[t!]
    \centering
    \scalebox{0.825}{
    \begin{tabular}{c|c cc|c c c c}
        \toprule

        &MEGNET& ETGNN & GMTNet  &{\bf GoeCTP (eCom.)} &{\bf GoeCTP (iCom.)} &{\bf GoeCTP (Crys.)}& {\bf GoeCTP (C. Fra.)}  \\      
        \midrule
        Fnorm $\downarrow$& 3.71 & 3.40 & 3.28 & 3.23&3.40&3.53 & \textbf{3.05}\\

        EwT 25\% $\uparrow$& 75.8\%  & 82.6\% & 83.3\% & 83.2\% &81.7\%&80.1\%& \textbf{86.4\%}\\
        EwT 10\% $\uparrow$ & 38.9\% & 49.1\% & 56.0\% & 56.8\% &53.8\%&52.9\% &\textbf{62.6\%}\\
        EwT 5\% $\uparrow$ & 18.0\% & 25.3\% & 30.5\% & 35.5\%&32.3\%&30.6\%&\textbf{43.5\%}\\

        Total Time (s) $\downarrow$& 663 & 1325 & 1611 & 616&\textbf{535} & 645&1976\\

        \bottomrule
    \end{tabular}
    }
    \caption{Additional comparison of performance metrics on dielectric dataset.}

        \label{ec_ic_outcome}
\end{table}

\begin{table}[t!]
    \centering
    \scalebox{0.825}{
    \begin{tabular}{c|c cc|c c c c}
        \toprule

        &MEGNET& ETGNN & GMTNet  &{\bf GoeCTP (eCom.)} &{\bf GoeCTP (iCom.)} &{\bf GoeCTP (Crys.)}& {\bf GoeCTP (C. Fra.)}  \\       
        \midrule
        Fnorm $\downarrow$& 143.86 & 123.64 & 117.62 &107.11& 102.80 &107.44 &\textbf{95.98}\\

        EwT 25\% $\uparrow$& 23.6\%  & 32.0\% & 36.0\% & 42.5\%& 46.7\%& 43.5\% &\textbf{49.7\%}\\
        EwT 10\% $\uparrow$ & 3.0\% & 3.8\% & 7.6\% & 15.3\% &18.6\%&15.8\%&\textbf{20.3\%}\\
        EwT 5\% $\uparrow$ & 0.5\% & 0.5\% & 2.0\% &7.2\%&8.2\% &7.9\% &\textbf{9.8\%}\\

        Total Time (s) $\downarrow$& 2899 & 4448 & $>$ 36000 & \textbf{2422}&4035 & 7891 &17197\\

        \bottomrule
    \end{tabular}
    }
    \caption{Additional comparison of performance metrics on 
elastic dataset.}
    \label{ec_ic_outcome2}
\end{table}

\subsection{How to further utilize the tensor properties symmetry}
\label{symmetry_utilize}

In this section, we will use the dielectric tensor as an example to briefly discuss how to further utilize the symmetry of tensor properties.

\textbf{Utilizing symmetry constraints for zero elements.} To demonstrate the role of symmetry constraints, we present an example of the GoeCTP (eCom.) prediction results in Table \ref{example_GoeCTP1}. Within a reasonable margin of error, the predictions align well with the expected constraints.

\begin{table}[!h]
    \centering
    \scalebox{0.95}{
    \begin{tabular}{c| c| c }
        \toprule

          Label& Prediction& Cubic dielectric tensor \\       
        \midrule
        
        $\begin{pmatrix}2.258&0&0\\0&2.258&0\\0&0&2.258\end{pmatrix}$  &$\begin{pmatrix}2.252&0.016&0.008\\0.016&2.230&0.007\\0.008&0.007&2.262\end{pmatrix}$ 
        &$\boldsymbol{\varepsilon}=\begin{pmatrix}\boldsymbol{\varepsilon}_{11}&0&0\\0&\boldsymbol{\varepsilon}_{11}&0\\0&0&\boldsymbol{\varepsilon}_{11}\end{pmatrix}$ \\
 
        \bottomrule
    \end{tabular}
    }
    \caption{An example of GoeCTP prediction results}
\label{example_GoeCTP1}
\end{table}

\begin{table}[!h]
    \centering
    \scalebox{0.95}{
    \begin{tabular}{c| c|  c|  c|  c|c}
        \toprule

        Crystal system&  Cubic&  Tetragonal&  Hexagonal-Trigonal &Orthorhombic&Monoclinic\\ 
        
        \midrule
        Success rate& 88.3\% & 86.6\%  &84.5\%&84.5\%&75.7\%\\        
        \bottomrule
    \end{tabular}
    }
    \caption{The GoeCTP results of predicting symmetry-constrained zero-valued dielectric
tensor elements.}
\label{GoeCTP_zero_rate} 
\end{table}

For a dielectric tensor, we evaluate the prediction of zero elements by defining a threshold as 1\% of the average value of non-zero elements in the labels. Predictions are deemed successful if they meet this threshold. The results are as shown in Table \ref{GoeCTP_zero_rate}. 
It can be observed that our method predicts most zero elements accurately, though not flawlessly. Since our advantage lies in transferring equivariance through an external framework, there are no restrictions on the model itself. Therefore, to achieve a higher success rate for zero elements, we added a ReLU activation function to the output layer of the network to improve the success rate 
(this applies only to cases where tensor elements are greater than or equal to zero).
The results after retraining GoeCTP (eCom.) are as shown in Table \ref{GoeCTP_ReLU}, Table \ref{GoeCTP_ReLU2}, and 
Table \ref{GoeCTP_ReLU3}. Our method predicts zero elements more accurately while maintaining overall performance.

\begin{table}[!h]
    \centering
    \scalebox{0.95}{
    \begin{tabular}{c| c| c }
        \toprule

          Label & Prediction& Cubic dielectric tensor \\       
        \midrule
        
        $\begin{pmatrix}2.258&0&0\\0&2.258&0\\0&0&2.258\end{pmatrix}$  &$\begin{pmatrix}2.237&0.000&0.000\\0.000&2.283&0.000\\0.000&0.000&2.228\end{pmatrix}$ 
        &$\boldsymbol{\varepsilon}=\begin{pmatrix}\boldsymbol{\varepsilon}_{11}&0&0\\0&\boldsymbol{\varepsilon}_{11}&0\\0&0&\boldsymbol{\varepsilon}_{11}\end{pmatrix}$ \\
 
        \bottomrule
    \end{tabular}
    }
    \caption{An example of GoeCTP (ReLU) prediction results}
\label{GoeCTP_ReLU}
\end{table}

\begin{table}[!h]
    \centering
    \scalebox{0.95}{
    \begin{tabular}{c| c|  c|  c|   c|c}
        \toprule

        Crystal system&  Cubic&  Tetragonal&  Hexagonal-Trigonal &Orthorhombic&Monoclinic\\ 
        
        \midrule
        Success rate& 100\% & 100\%  &87.2\%&100\%&100\%\\        
        \bottomrule
    \end{tabular}
    }
    \caption{The GoeCTP (ReLU) results of predicting symmetry-constrained zero-valued dielectric
tensor elements.}
\label{GoeCTP_ReLU2}        
\end{table}

\begin{table}[!h]
    \centering
    \begin{tabular}{c| c|c }
        \toprule
      
        &GoeCTP &GoeCTP (ReLU) \\       
        \midrule
        Fnorm $\downarrow$& 3.23 & 3.26\\

        EwT 25\% $\uparrow$& 83.2\%  &82.6\% \\
        EwT 10\% $\uparrow$ & 56.8\%&58.4\%\\
        EwT 5\% $\uparrow$  &35.5\%&36.3\% \\

        \bottomrule
    \end{tabular}
    \caption{Performance comparisons between  GoeCTP and GoeCTP (ReLU) on the dielectric dataset.}
\label{GoeCTP_ReLU3}     
\end{table}

This simple modification enables GoeCTP (eCom.) to more accurately predict the zero elements in dielectric tensors caused by the space group constraints.

\textbf{Utilizing symmetry constraints for non-zero elements.}
The symmetry constraints for zero elements serve as a foundational example. When the space group of the input crystal is known, prior knowledge (refer to Appendix \ref{Voigt}) can be employed to ensure that network outputs fully adhere to tensor property constraints.
There are two specific methods to achieve this.

In the first method, only the independent components of the tensor properties are predicted. For example, for a cubic crystal, we predict only one independent component and then use the mask from Table \ref{dielectric_crystal_systems} to reconstruct it into a $3 \times 3$ tensor). In Table \ref{example_GoeCTP1}, GoeCTP (eCom.) was modified to directly predict independent components. The new results are shown in Table \ref{newexample_GoeCTP}, which fully satisfy the symmetry constraints. As shown in Table \ref{GoeCTP_mask3}, the overall prediction performance indicates that directly predicting the independent components improves the model's Fnorm performance on the dielectric
dataset. 
Since this method directly predicts independent components, it is more suitable for cases like the dielectric tensor, where the independent components exhibit relatively simple structures. However, for more complex datasets, such as those involving piezoelectric and elastic tensors, the model's performance using this method may degrade. In such scenarios, the second method provides a more viable alternative.

\begin{table}[!h]
    \centering
    \scalebox{0.95}{
    \begin{tabular}{c| c| c }
        \toprule

          Label & Prediction& Cubic dielectric tensor \\       
        \midrule
        
        $\begin{pmatrix}2.258&0&0\\0&2.258&0\\0&0&2.258\end{pmatrix}$  &$\begin{pmatrix}2.357&0.000&0.000\\0.000&2.357&0.000\\0.000&0.000&2.357\end{pmatrix}$ 
        &$\boldsymbol{\varepsilon}=\begin{pmatrix}\boldsymbol{\varepsilon}_{11}&0&0\\0&\boldsymbol{\varepsilon}_{11}&0\\0&0&\boldsymbol{\varepsilon}_{11}\end{pmatrix}$ \\
 
        \bottomrule
    \end{tabular}
    }
    \caption{An example of GoeCTP (predicting the independent components) prediction results}
\label{newexample_GoeCTP}
\end{table}

\begin{table}[!h]
    \centering
  
    \begin{tabular}{c| c|c }
        \toprule
      
        &GoeCTP &GoeCTP (independent) \\       
        \midrule
        Fnorm $\downarrow$& 3.23 & 3.11\\

        EwT 25\% $\uparrow$& 83.2\%  &80.5\% \\
        EwT 10\% $\uparrow$ & 56.8\%&57.3\%\\
        EwT 5\% $\uparrow$  &35.5\%&35.5\% \\

        \bottomrule
    \end{tabular}
    \caption{Performance comparisons between  GoeCTP and GoeCTP (predicting the independent components) on the dielectric dataset.}
\label{GoeCTP_mask3}     
\end{table}

In the second method, once the network training is completed, the network output is weighted using the standard tensor property format, referred to as a mask, as detailed in Table \ref{dielectric_crystal_systems}. The application of this method to the elastic datasets is demonstrated in Table \ref{GoeCTP_eas_mask}. The results indicate that incorporating symmetry-based weighting effectively improves the performance of GoeCTP.

\begin{table}[!h]
    \centering
    
    \begin{tabular}{c| c|c }
        \toprule
      
        &GoeCTP &GoeCTP (weighted) \\       
        \midrule
        Fnorm $\downarrow$& 107.11 & 106.24\\

        EwT 25\% $\uparrow$& 42.5\%  &46.3\% \\
        EwT 10\% $\uparrow$ & 15.3\%&16.5\%\\
        EwT 5\% $\uparrow$  &7.2\%&7.4\% \\

        \bottomrule
    \end{tabular}
    \caption{Performance comparisons between  GoeCTP and GoeCTP (weighted by the standard tensor property form) on the elastic dataset.}
\label{GoeCTP_eas_mask}     
\end{table}

\subsection{Analysis of discontinuity in canonicalization 
}
\label{non_continuous}

Previous studies \citep{dymequivariant} have shown that canonicalization can often introduce discontinuities into the model. To assess the impact of such discontinuities on our method, we analyze our proposed method from both theoretical and empirical perspectives.

\textbf{From theoretical perspective:} 
The following proposition, presented in Appendix B of \citet{dymequivariant}, characterizes the existence of a continuous canonicalization under orthogonal group actions:

\begin{proposition}
Consider the action of the orthogonal group $O(d)$ on $\mathbf{X}\in\mathbb{R}^{n \times d}$. There exists a continuous canonicalization for $O(d)$ when $n \leq d$. This continuous canonicalization can be defined as $y_\mathbf{X} = \left( \mathbf{X}^\top \mathbf{X} \right)^{1/2}$, where $\mathbf{X}^\top \mathbf{X}$ forms a positive semi-definite matrix and the square root is the standard square root of a positive semi-definite matrix, i.e. $\left( \mathbf{A} \right)^{1/2}$ is the unique positive semidefinite matrix $\mathbf{B}$ such that $\mathbf{B}^2 = \mathbf{A}$.
\label{proposition_noncon}
\end{proposition}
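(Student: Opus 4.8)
\textbf{Proof plan for Proposition~\ref{proposition_noncon}.}
The plan is to verify the two requirements that a canonicalization must satisfy according to Definition~\ref{definition_Canonicalization} (adapted to the action of $O(d)$ on $\mathbb{R}^{n\times d}$): first, that the proposed map $y_\mathbf{X}=(\mathbf{X}^\top\mathbf{X})^{1/2}$ sends every element of an orbit to a common representative of that orbit (equivalently, that it is $O(d)$-invariant and that its value lies in the same orbit as $\mathbf{X}$); and second, that this assignment is continuous in $\mathbf{X}$. The hypothesis $n\le d$ enters crucially in the ``same orbit'' part, and I expect that to be the main obstacle; the invariance and continuity parts are comparatively routine.

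First I would check invariance. For any $\mathbf{Q}\in O(d)$, the action sends $\mathbf{X}$ to $\mathbf{X}\mathbf{Q}^\top$ (row vectors rotated), so $(\mathbf{X}\mathbf{Q}^\top)^\top(\mathbf{X}\mathbf{Q}^\top)=\mathbf{Q}\mathbf{X}^\top\mathbf{X}\mathbf{Q}^\top$; wait — more carefully, $(\mathbf{X}\mathbf{Q}^\top)^\top(\mathbf{X}\mathbf{Q}^\top)=\mathbf{Q}\,\mathbf{X}^\top\mathbf{X}\,\mathbf{Q}^\top$ is not obviously equal to $\mathbf{X}^\top\mathbf{X}$, so the correct convention must be that $O(d)$ acts by $\mathbf{X}\mapsto\mathbf{X}\mathbf{Q}$ acting on the ambient $\mathbb{R}^d$ from the left on each row transposed, i.e. the relevant Gram matrix is $\mathbf{X}\mathbf{X}^\top\in\mathbb{R}^{n\times n}$ which is manifestly unchanged when $\mathbf{X}\mapsto\mathbf{X}\mathbf{Q}$. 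I would therefore fix the convention so that the invariant object is $\mathbf{X}\mathbf{X}^\top$ (consistent with the role of the lattice matrix $\mathbf{L}$ earlier, where $\mathbf{Q}$ multiplies on the left and $\mathbf{H}^2=\mathbf{L}^\top\mathbf{L}$ plays this role up to transpose), note that it is symmetric positive semidefinite, and invoke the standard fact that a symmetric PSD matrix has a unique symmetric PSD square root, obtained functional-calculus-style by diagonalizing and taking nonnegative square roots of eigenvalues. Invariance of $y_\mathbf{X}$ is then immediate since $y_\mathbf{X}$ depends on $\mathbf{X}$ only through this invariant Gram matrix.

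Next I would establish that $y_\mathbf{X}$ lies in $\mathrm{Orbit}(\mathbf{X})$, which is where $n\le d$ is needed. The idea is that $y_\mathbf{X}$ and $\mathbf{X}$ have the same Gram matrix: indeed $y_\mathbf{X}\,y_\mathbf{X}^\top=y_\mathbf{X}^2=\mathbf{X}\mathbf{X}^\top$ (using symmetry of $y_\mathbf{X}$, after padding $y_\mathbf{X}\in\mathbb{R}^{n\times n}$ to an $n\times d$ matrix by appending $d-n$ zero columns — this padding is exactly what requires $n\le d$). Two configurations of $n$ vectors in $\mathbb{R}^d$ with the same Gram matrix differ by an element of $O(d)$; this is the classical statement that the Gram matrix is a complete invariant of a tuple of vectors up to orthogonal transformation, proved by mapping an orthonormal-ish basis of one span to that of the other and extending to all of $\mathbb{R}^d$ (possible precisely because the ambient dimension $d$ is at least the number of vectors $n$, so there is ``room'' to extend the partial isometry). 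Hence there is $\mathbf{Q}\in O(d)$ with $y_\mathbf{X}=\mathbf{X}\mathbf{Q}$, so $y_\mathbf{X}\in\mathrm{Orbit}(\mathbf{X})$, and combined with invariance this shows $y_{(\cdot)}$ is an orbit canonicalization in the sense of Definition~\ref{definition_Canonicalization}.

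Finally, for continuity I would argue that $\mathbf{X}\mapsto\mathbf{X}\mathbf{X}^\top$ is polynomial hence continuous, and that the symmetric PSD square root is continuous on the cone of symmetric PSD matrices. The latter is the one genuinely analytic point: although eigenvectors need not vary continuously when eigenvalues collide, the square-root function $t\mapsto\sqrt{t}$ is continuous on $[0,\infty)$, and the matrix functional calculus $f\mapsto f(\mathbf{A})$ is continuous in $\mathbf{A}$ for any fixed continuous $f$ on a neighborhood of the spectrum — this can be seen via uniform approximation of $\sqrt{t}$ by polynomials on a compact interval containing all relevant eigenvalues (Weierstrass), since polynomial functions of a matrix are obviously continuous, and a locally uniform limit of continuous maps is continuous. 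Assembling these three pieces — invariance, orbit membership (using $n\le d$), and continuity of the PSD square root — completes the proof. The main obstacle, as noted, is the orbit-membership step, specifically making precise the Gram-matrix completeness argument and the zero-column padding, which is exactly where the dimension constraint is unavoidable.
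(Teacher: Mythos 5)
Your proposal is correct, but it is doing something different from what the paper does: the paper does not prove Proposition~\ref{proposition_noncon} at all --- it imports it verbatim from Appendix~B of \citet{dymequivariant}, and the proof environment that follows it in the appendix only verifies the much smaller claim that the polar factor $\mathbf{H}$ of the lattice matrix coincides with $(\mathbf{L}^\top\mathbf{L})^{1/2}$ (via $(\mathbf{H}^\top\mathbf{Q}^\top\mathbf{Q}\mathbf{H})^{1/2}=(\mathbf{H}^2)^{1/2}=\mathbf{H}$), so that the cited result applies. You instead reconstruct the cited result itself, and your three-step argument --- $O(d)$-invariance of the Gram matrix, orbit membership of the padded PSD square root via completeness of the Gram matrix as an invariant, and continuity of the PSD square root by Weierstrass approximation in the functional calculus --- is exactly the standard proof and is sound. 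Two remarks. First, you are right to flag the convention: as transcribed in the paper, $\mathbf{X}\in\mathbb{R}^{n\times d}$ with the natural right action of $O(d)$ makes $\mathbf{X}\mathbf{X}^\top$ (not $\mathbf{X}^\top\mathbf{X}$) the invariant, and your fix is the correct reading; the statement is consistent as written only under the transposed $d\times n$ convention of the original source, or in the square case $n=d$ with a left action, which is the only case the paper actually uses. Second, a minor imprecision: the fact that two $n$-tuples in $\mathbb{R}^d$ with equal Gram matrices are related by an element of $O(d)$ holds for any $n$ and $d$ (the partial isometry between the two spans always extends, since the spans have equal dimension $\le d$); the hypothesis $n\le d$ is needed only so that the $n\times n$ matrix $(\mathbf{X}\mathbf{X}^\top)^{1/2}$ can be zero-padded into an element of $\mathbb{R}^{n\times d}$ at all, as you also correctly state. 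What your route buys is a self-contained justification of the continuity claim on which Appendix~\ref{non_continuous} leans; what the paper's route buys is brevity, at the cost of leaving the proposition (and its slightly garbled conventions) unexamined.
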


In fact, the polar decomposition adopted in this work can be regarded as a case of such continuous canonicalization.

\begin{proof}
According to Proposition \ref{proposition_noncon}, for a matrix $\mathbf{L} \in\mathbb{R}^{3 \times 3}$, its continuous canonicalization norm is given by $y_\mathbf{L} = \left( \mathbf{L}^\top \mathbf{L} \right)^{1/2}$.   

In this work, the polar decomposition is applied to the lattice matrix $\mathbf{L}=\mathbf{Q}\mathbf{H}, \mathbf{L} \in\mathbb{R}^{3 \times 3}$, where $\mathbf{Q}$ is an orthogonal matrix, and $\mathbf{H}$ is the positive semi-definite Gram matrix (which serves as the canonicalization norm). Based on the properties of positive semi-definite matrices, our method is actually equivalent to the canonicalization approach described in Proposition \ref{proposition_noncon}, i.e. $y_\mathbf{L} = \left( \mathbf{L}^\top \mathbf{L} \right)^{1/2}=\left( (\mathbf{Q}\mathbf{H})^\top \mathbf{Q}\mathbf{H} \right)^{1/2}=\left( \mathbf{H}^\top \mathbf{H} \right)^{1/2}=\mathbf{H}$.
\end{proof}

Therefore, our canonicalization via polar decomposition is consistent with the continuous canonicalization. 

\textbf{From empirical perspective:}
To further demonstrate how this discontinuity affects the accuracy of tensor property predictions, we conducted a set of comparative experiments in which we replaced the polar decomposition in GoeCTP (eCom.) with QR decomposition \citep{linequivariance} for evaluation. 
In this QR decomposition case, the lattice matrix is decomposed as $\mathbf{L}=\mathbf{Q}\mathbf{R}$, where $\mathbf{Q}$ is an orthogonal matrix, and $\mathbf{R}$ is an upper triangular matrix that serves as the canonicalization norm. However, $\mathbf{R}$ does not satisfy Proposition \ref{proposition_noncon}, and therefore does not constitute a continuous canonicalization approach.

To assess the sensitivity of each approach to small structural perturbations, we used the crystal $CdI_2$ (JARVIS\_ID: JVASP-29631) as the model input, and then perturbed its lattice matrix (specifically, scaling down the first row of the lattice matrix) to observe the relative changes in the output (specifically the variations in the Fabenius norm of the output tensor). The detailed results are as follows:

\begin{table}[!h]
    \centering
    \resizebox{\textwidth}{!}{
    \begin{tabular}{l | c  c  c  c  c  c  c  c   c c}
        \toprule
        Perturbation Ratio&0\%&5\% &10\% &15\% &20\% &25\% & 30\%  & 35\%  & 40\% \\       
        \midrule
        Output variation ratio (\%): {\bf GoeCTP (QR)}&0&3.38&6.63&11.29&21.83&39.37& 67.81 & 124.79 & 191.28 \\
Output variation ratio (\%): {\bf GoeCTP (polar)}&0&2.53 &1.23&2.03&4.84&12.72& 24.88 &36.34 & 45.67 \\
       
        \bottomrule
    \end{tabular}
    }
    \caption{Output variation ratio under different perturbation ratios on the dielectric dataset}
\end{table}

The results clearly show that the output of GoeCTP using polar decomposition exhibits significantly smaller relative changes compared to the QR-based variant, particularly under increasing levels of perturbation. This highlights the robustness of the continuous canonicalization afforded by polar decomposition, demonstrating its superiority in maintaining prediction stability and structural consistency under small input variations.

\subsection{How to design other canonical forms}
\label{local_frame_design}

In fact, the canonical form obtained via polar decomposition provides a foundation from which infinitely many alternative canonical forms can be derived. Specifically, given a crystal with lattice matrix $\mathbf{L}$, polar decomposition yields $\mathbf{L}=\mathbf{Q}\mathbf{H}$. 
While $\mathbf{H}$ serves as one valid canonical representation, additional canonical forms can be constructed by applying orthogonal transformations to $\mathbf{H}$.
For instance, given a fixed orthogonal matrix $\mathbf{Q}_0$ (which may be arbitrarily chosen) or a neural network $f_Q(\mathbf{H})=\mathbf{Q}_\theta$ that predicts an orthogonal matrix $\mathbf{Q}_\theta$, one can construct new canonical forms such as $\mathbf{Q}_0\mathbf{H}$ or $\mathbf{Q}_\theta\mathbf{H}$. 
This is because, for any given crystal with an arbitrary orientation, a fixed canonical form can be obtained according to a predefined computation rule, i.e. $\mathbf{L}=\mathbf{Q}\mathbf{H} \quad\to\quad \mathbf{L}=\mathbf{Q}\mathbf{Q}_0^\top\mathbf{Q}_0\mathbf{H} \quad\text{or}\quad\mathbf{L}=\mathbf{Q}\mathbf{Q}_\theta^\top\mathbf{Q}_\theta\mathbf{H}$.
This formulation highlights the inherent flexibility in defining canonical forms.

\subsection{Impact of different canonical form on tensor property prediction}
\label{local_frame}

As discussed in Appendix \ref{local_frame_design}, there exist infinitely many possible canonical forms. To investigate the impact of different canonical forms on tensor property prediction in crystalline materials, we first present the results comparing the performance of GoeCTP (eCom.) with QR decomposition \citep{linequivariance} and GoeCTP (eCom.) without canonicalization. The corresponding results are summarized in the table below:

\begin{table}[h!]
    \centering
    \begin{tabular}{c| c|c|c }
        \toprule
      
        &{\bf GoeCTP (QR)} &{\bf GoeCTP (polar)} &{\bf GoeCTP (w/o QR or polar)}\\       
        \midrule
        Fnorm $\downarrow$& \textbf{3.20} & 3.23& 3.60 \\

        EwT 25\% $\uparrow$&\textbf{83.5\% }&83.2\%& 80.6\% \\
        EwT 10\% $\uparrow$ &56.0\%&\textbf{56.8\%}&56.2\% \\
        EwT 5\% $\uparrow$ & 32.4\%&\textbf{35.5\%}& 32.6\% \\

        Total Time (s) $\downarrow$ & 639&616&613\\
       
        \bottomrule
    \end{tabular}
    \caption{Predictive performance comparisons between GoeCTP (QR), GoeCTP (polar), and GoeCTP (w/o QR or polar) on the dielectric dataset.}
\end{table}

According to the results, it is evident that incorporating a canonical form improves tensor prediction performance compared to omitting it. While different canonical forms lead to slight variations in performance, the overall impact remains relatively modest. One possible explanation is that canonical forms typically exhibit structured representations, such as the symmetric positive semi-definite matrix obtained via polar decomposition or the upper triangular matrix from QR decomposition, which may facilitate the model's ability to capture relevant features.

\end{document}